\newfont{\mycrnotice}{ptmr8t at 7pt}
\newfont{\myconfname}{ptmri8t at 7pt}
\newtheorem{theorem}{Theorem}
\newtheorem{lemma}{Lemma}
\begin{document}

\title{Data Dissemination Performance \\in Large-Scale Sensor Networks}
\numberofauthors{2} 
\author{
\alignauthor
Thomas M.M. Meyfroyt, Sem C. Borst, Onno J. Boxma\\
       \affaddr{Eindhoven University of Technology}\\
       \affaddr{P.O. Box 513, 5600 MB Eindhoven, The Netherlands}\\
       \email{\{t.m.m.meyfroyt, s.c.borst, o.j.boxma\}@tue.nl}      
\alignauthor
Dee Denteneer\\
       \affaddr{Philips Research}\\
       \affaddr{HTC 34, 5656 AE Eindhoven, The Netherlands}\\
       \email{dee.denteneer@philips.com}       
    }
\date{\today}

\maketitle

\begin{abstract}
As the use of wireless sensor networks increases, the need for (energy-)efficient and reliable broadcasting algorithms grows. Ideally, a broadcasting algorithm should have the ability to quickly disseminate data, while keeping the number of transmissions low. In this paper we develop a model describing the message count in large-scale wireless sensor networks. We focus our attention on the popular Trickle algorithm, which has been proposed as a suitable communication protocol for code maintenance and propagation in wireless sensor networks. Besides providing a mathematical analysis of the algorithm, we propose a generalized version of Trickle, with an additional parameter defining the length of a listen-only period. This generalization proves to be useful for optimizing the design and usage of the algorithm. For single-cell networks we show how the message count increases with the size of the network and how this depends on the Trickle parameters. Furthermore, we derive distributions of inter-broadcasting times and investigate their asymptotic behavior. Our results prove conjectures made in the literature concerning the effect of a listen-only period. Additionally, we develop an approximation for the expected number of transmissions in multi-cell networks. All results are validated by simulations. 
\end{abstract}

\category{C.2.1}{Computer-Communication Networks}{Network Architecture and Design}[Wireless Communication]
\terms{ALGORITHMS, DESIGN, PERFORMANCE}
\keywords{Analytical model, gossip protocol, message count, message overhead, Trickle algorithm, wireless sensor networks} 

\section{Introduction}
Wireless sensor networks (WSNs) have become more and more popular in the last few years and have many applications \cite{trick1}. These networks consist of compact, inexpensive sensor units that can communicate with each other by wireless transmissions. They require efficient and reliable communication protocols that can quickly propagate new information, while keeping the number of transmissions low, in order to conserve energy and maximize the lifetime of the network. Several data dissemination protocols have been proposed in recent years for this purpose \cite{trick13,trick17, trick3, trick11, trick12}. 

In \cite{trick3} the Trickle algorithm has been proposed in order to effectively and efficiently distribute and maintain code in wireless sensor networks. Trickle relies on a ``polite gossip" policy to quickly propagate updates across a network of nodes. It uses a counter method to reduce the number of redundant transmissions in a network and to prevent a broadcast storm \cite{trick14}. This makes Trickle also a very energy-efficient and popular method of maintaining a sensor network. The algorithm has been standardized by the IETF as the mechanism that regulates the transmission of the control messages used to create the network graph in the IPv6 Routing Protocol for Low power and Lossy Networks (RPL) \cite{trick2}. Additionally, it is used in the Multicast Protocol for Low power and Lossy Networks (MPL), which provides IPv6 multicast forwarding in constrained networks and is currently being standardized \cite{trick20}. Since the algorithm has such a broad applicability, the definition of Trickle has been documented in its own IETF RFC 6206 \cite{trick4}.

Because of the popularity of the algorithm, it is crucial to gain insight in how the various Trickle parameters influence network performance measures, such as energy consumption, available bandwidth and latency. It is clear that such insights are crucial for optimizing the performance of wireless sensor networks. However, there are only a few results in that regard, most of them obtained via simulation studies \cite{trick5, trick3}. 

Some analytical results are obtained in \cite{trick7, trick6, trick5}.
First, in \cite{trick5} qualitative results are provided for the scalability of the Trickle algorithm, but a complete analysis is not given. More specifically, the authors of \cite{trick5} state that in a lossless single-cell network without a listen-only period the expected number of messages per time interval scales as $\mathcal{O}(\sqrt{n})$. Here $n$ is the number of nodes in the network and single-cell means that all the nodes in the network are within communication range from each other. Supposedly, introducing a listen-only period bounds the expected number of transmissions by a constant. The analysis in our paper confirms these claims and provides more explicit results. Secondly, in \cite{trick6} a model is developed for estimating the message count of the Trickle algorithm in a multi-cell network, assuming a uniformly random spatial distribution of the nodes in the network. However, the influence of specific Trickle parameters on the message count is not explicitly addressed. Lastly, in \cite{trick7} an analytical model is developed for the time it takes the Trickle algorithm to update a network, through the use of Laplace transforms.  To the best of the authors' knowledge, no other analytical models for the performance of the Trickle algorithm have been published yet. 

The goal of this paper is to develop and analyze stochastic models describing the Trickle algorithm and gain insight in how the Trickle parameters influence inter-transmission times and the message count. These insights could help optimize the energy-efficiency of the algorithm and consequently the lifetime of wireless sensor networks. Furthermore, knowing how the inter-transmission times depend on the various parameters, could help prevent hidden-node problems in a network and optimize the capacity of wireless sensor networks.  Additionally, our models are relevant for the analysis of other communication protocols that build upon Trickle, such as CTP \cite{trick19}, Deluge \cite{trick15} and Melete \cite{trick12}, and could give insight in their performance.

As key contributions of this paper, we first propose a generalized version of the Trickle algorithm by introducing a new parameter $\eta$, defining the length of a listen-only period. This addition proves to be useful for optimizing the design and usage of the algorithm. Furthermore, we derive the distribution of an inter-transmission time and the joint distribution of consecutive inter-transmission times for large-scale single-cell networks. We show how they depend on the Trickle parameters and investigate their asymptotic behavior.  Additionally, we show that in a single-cell network without a listen-only period the expected number of transmissions per time interval is unbounded and grows as $\sqrt{2n} \Gamma\left[\frac{k+1}{2}\right]/\Gamma\left[\frac{k}{2}\right]$, where $k$ is called the redundancy constant and $n$ the number of nodes as before. When a listen-only period of $\eta>0$ is introduced, the message count is bounded by $k/\eta$ from above. We then use the results from the single-cell analysis to develop an approximation for the transmission count in multi-cell networks. All our results are compared and validated with simulation results and prove to be accurate, even for networks consisting of relatively few nodes. 
\subsection{Organization of the Paper}
The remainder of this paper is organized as follows. In Section 2 we give a detailed description of the Trickle algorithm. Furthermore, we introduce a new parameter defining the length of a listen-only period and discuss its relevance. In Section 3 we develop a mathematical model describing the behavior of the Trickle algorithm in large-scale single-cell networks. We first briefly list the main results, before presenting the details of the model and its analysis. This is followed by a brief discussion of our results and finally we validate our findings with simulations. In Section 4, we use the results from Section 3 to approximate the message count in multi-cell networks and again compare our approximations with simulation results. In Section 5 we make some concluding remarks.
\vfill\eject
\section{The Trickle Algorithm}
The Trickle algorithm has two main goals. First, whenever a new update enters the network, it must be propagated quickly throughout the network. Secondly, when there is no new update in the network, communication overhead has to be kept to a minimum. 

The Trickle algorithm achieves this by using a ``polite gossip" policy. Nodes divide time into intervals of varying length. During each interval a node will broadcast its current information, if it has heard fewer than, say, $k$ other nodes transmit the same information during that interval, in order to check if its information is up to date. If it has recently heard at least $k$ other nodes transmit the same information it currently has, it will stay quiet, assuming there is no new information to be received. Additionally, it will increase the length of its intervals, decreasing its broadcasting rate. Whenever a node receives an update or hears outdated information, it will reduce the length of its intervals, increasing its broadcasting rate, in order to quickly update nodes that have outdated information. This way inconsistencies are detected and resolved quickly, while keeping the number of transmissions low. 

\subsection{Algorithm Description}
We now describe the Trickle algorithm in its most general form (see also \cite{trick3}). The algorithm has four parameters:
 \begin{itemize}
 \item A threshold value $k$, called the redundancy constant.
 \item The maximum interval length $\tau_h$.
 \item The minimum interval length $\tau_l$.
 \item The listen-only parameter $\eta$, defining the length of a listen-only period.
 \end{itemize}
Furthermore, each node in the network has its own timer and keeps track of three variables:
  \begin{itemize}
 \item The current interval length $\tau$.
 \item A counter $c$, counting the number of messages heard during an interval.
 \item A broadcasting time $\theta$ during the current interval.
 \end{itemize}
The behavior of each node is described by the following set of rules:
\begin{enumerate}
 \item At the start of a new interval a node resets its timer and counter $c$ and sets $\theta$ to a value in $[\eta\tau,\tau]$ uniformly at random.
 \item When a node hears a message that is consistent with the information it has, it increments $c$ by 1.
 \item When a node's timer hits time $\theta$, the node broadcasts its message if $c<k$.
 \item When a node's timer hits time $\tau$, it doubles its interval length $\tau$ up to $\tau_h$ and starts a new interval.
 \item When a node hears a message that is inconsistent with its own information, then if $\tau>\tau_l$ it sets $\tau$ to $\tau_l$ and starts a new interval, otherwise it does nothing.
\end{enumerate}
\begin{figure}[!h]
\centering
\includegraphics[width=8.2cm]{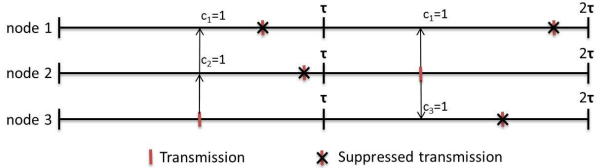}
\caption{Example of three synchronized nodes
using the Trickle algorithm.}
\label{fig:1}
\end{figure}
In Figure \ref{fig:1} an example is depicted of a network consisting of three nodes using the Trickle algorithm with $k=1$ and $\tau=\tau_h$ for all nodes. During the first interval, node 3 is the first node that attempts to broadcast and consequently it is successful. The broadcasts of nodes 1 and 2 during that interval are then suppressed. During the second interval the broadcast of node 2 suppresses the other broadcasts.

Note that in the example in Figure \ref{fig:1}, the intervals of the three nodes are synchronized. However, in general, the times at which nodes start their intervals need not be synchronized. In a synchronized network, all the nodes start their intervals at the same time, while in an unsynchronized network this is not necessarily the case. In practice, networks will generally not be synchronized, since synchronization requires additional communication and consequently imposes energy overhead. Furthermore, as nodes get updated and start new intervals, they automatically lose synchronicity. 

\subsection{The Listen-Only Parameter {\large$ \eta$}}
Note that the parameter $\eta$ is not introduced in the description of the Trickle algorithm in \cite{trick4} and \cite{trick3}. We have added this parameter ourselves, so we can analyze a more general version of the Trickle algorithm. The authors of \cite{trick3} propose to always use a listen-only period of half an interval, i.e., $\eta=\frac{1}{2}$, because of the so-called short-listen problem, which is discussed in the same paper. When no listen-only period is used, i.e., $\eta=0$, sometimes nodes will broadcast soon after the beginning of their interval, listening for only a short time, before anyone else has a chance to speak up. If we have a perfectly synchronized network this does not give a problem, because the first $k$ transmissions will simply suppress all the other broadcasts during that interval. However in an unsynchronized network, if a node has a short listening period, it might broadcast just before another node starts its interval and that node possibly also has a short listening period. This possibly leads to a lot of redundant messages and is referred to as the short-listen problem.

In \cite{trick3} it is claimed that not having a listen-only period and $k=1$ makes the number of messages per time interval scale as $\mathcal{O}(\sqrt{n})$, due to the short-listen problem. When a listen-only period of $\tau/2$ is used, the expected number of messages per interval is supposedly bounded by 2, resolving the short-listen problem and improving scalability.

However, introducing a listen-only period also has its disadvantages. Firstly, when a listen-only period of $\tau/2$ is used, newly updated nodes will always have to wait for a period of at least $\tau_l/2$, before attempting to propagate the received update. Consequently, in an $m$-hop network, the end-to-end delay is at least $m\frac{\tau_l}{2}$. Hence, a listen-only period greatly affects the speed at which the Trickle algorithm can propagate updates. 
\begin{figure}[!h]
\centering
\includegraphics[width=8cm]{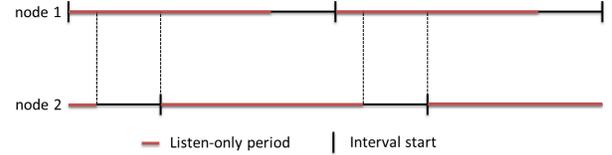}
\caption{One node carries the complete transmission load of the network.}
\label{fig:2}
\end{figure}
Secondly, introducing a listen-only period has a negative effect on the load distribution. This is illustrated by Figure \ref{fig:2}, where node 2's broadcasting intervals completely overlap with node 1's listen-only periods and vice versa. Consequently, one node will always transmit and suppress the other node's transmissions, depending on which node first starts broadcasting. The probability of having an uneven load distribution increases as $\eta$ increases.

For these reasons one might want to consider using a shorter listen-only period, which raises the question what length is optimal. Therefore we have added the parameter $\eta$, which allows us to investigate the effect of using a listen-only period of general length on the message count.

\section{Single-Cell Network}
In this section we will develop a mathematical model for the message count in single-cell networks. We consider a steady-state regime, where all nodes are up to date and their interval lengths have settled to $\tau=\tau_h$. Since in this state all communication is essentially redundant, energy-consumption is the critical performance measure.  We denote by $N^{(k,n)}$ the number of transmissions during an interval of length $\tau_h$ for a given threshold value $k$ in a single-cell network consisting of $n$ nodes.  Without loss of generality, we will assume that $\tau_h=1$. We are interested in determining the behavior of $\mathbb{E}[N^{(k,n)}]$ as $n$ grows large. Additionally, let us denote an inter-transmission time for a given value of $k$ and cell-size $n$ by $T^{(k,n)}$. We will determine the distribution of $T^{(k,n)}$ as $n$ grows large.  Whenever we write $a(n)\sim b(n)+c$, where $c$ is a constant, this means $a(n)-c$ is approximately equal to $b(n)$ as $n$ grows large and more formally 
\[a(n)\sim b(n)+c \text{ denotes } \lim_{n\rightarrow \infty} (a(n)-c)/b(n)=1.\] 
We first give a brief overview of our main results, before presenting the analytical model and its analysis. We then discuss our results and show through simulations, that our asymptotic results also provide good estimates for networks consisting of relatively few nodes. 

\subsection{Main Results}
First we look at the case $k=1$. We show that the cumulative distribution function of $T^{(1,n)}$ for large $n$ behaves as
\begin{equation}
F^{(1,n)}(t)=
\begin{cases}
  0, & t<\eta,\\
  1-e^{-\frac{n}{2}\frac{(t-\eta)^2}{1-\eta}} , & t\geq \eta.
  \end{cases} 
\end{equation}
This lets us deduce that 
\begin{equation} 
\mathbb{E}\left[T^{(1,n)}\right]\sim\eta+\sqrt{\frac{\pi(1-\eta)}{2n}}.
\end{equation}
We conclude that $\mathbb{E}[N^{(1,n)}]\sim\sqrt{\frac{2n}{\pi}}$, when $\eta=0$. This proves the claim from \cite{trick3}, that when no listen-only period is used, $\mathbb{E}[N^{(1,n)}]=\mathcal{O}(\sqrt{n})$, and shows that the pre-factor is $\sqrt{\frac{2}{\pi}}$. Furthermore, when $\eta>0$, $\mathbb{E}[N^{(1,n)}]\uparrow \frac{1}{\eta}$, with a convergence rate of $\sqrt{n}$, proving another claim from \cite{trick3}.

For the case $k\geq 2$, we first derive the density function for the distribution of $k-1$ consecutive inter-broadcasting times. We then use this result to deduce that the density function of the distribution of $T^{(k,n)}$ for large $n$ behaves as
\begin{equation}
\resizebox{.88\linewidth}{!}{$\displaystyle
f^{(k,n)}(t)=\frac{C_{(k,n)}}{(k-2)!}\int_{0}^\infty\lambda(t\text{ }\vert\text{ }\nu)\nu^{k-2}\exp\left[-\frac{n(t+\nu-\eta)^2}{2(1-\eta)}\right]\text{d}\nu.
$}
\end{equation} 
Here
\begin{equation}
\lambda(t\text{ }\vert\text{ }\nu)=
\left\{\begin{array}{ll}
0, & \hbox{$t+\nu<\eta,$}\\\\
\frac{n}{1-\eta}(t+\nu-\eta), & \hbox{$t+\nu\geq\eta$,} \end{array} \right.
\end{equation}
and 
\begin{equation}
\resizebox{.88\linewidth}{!}{$\displaystyle
C_{(k,n)}=\left(\frac{\eta^{k-1}}{(k-1)!}+\int_\eta^\infty \frac{t^{k-2}}{(k-2)!}\exp\left[-\frac{n (t-\eta)^2}{2(1-\eta)}\right]\text{d}t\right)^{-1}
$}.
\end{equation}
Additionally we find for the $j$th moment of $T^{(k,n)}$:
\begin{equation}
\mathbb{E}\left[\left(T^{(k,n)}\right)^j\right]\sim j!\frac{C_{(k,n)}}{C_{(k+j,n)}}.
\end{equation}
Hence, for $\eta=0$, $\mathbb{E}[N^{(k,n)}]\sim\sqrt{2n}\Gamma\left[\frac{k+1}{2}\right]/\Gamma\left[\frac{k}{2}\right]$, which is again $\mathcal{O}(\sqrt{n})$. Moreover, when $\eta>0$, $\mathbb{E}[N^{(k,n)}]\uparrow \frac{k}{\eta}$, as $n \rightarrow \infty$, with a convergence rate of $\sqrt{n}$.

We then use these results to derive the asymptotic distributions of inter-broadcasting times. When $\eta>0$ and $k\geq2$, we show that
\begin{equation}
\frac{1}{\eta}T^{(k,n)}\xrightarrow{d}\text{Beta}(1,k-1)\text{, as }n\rightarrow \infty,
\end{equation}
and
\begin{equation}
\frac{k}{\eta}T^{(k,n)}\xrightarrow{d}\text{Exp}(1)\text{, as }n\rightarrow \infty\text{ and }k\rightarrow \infty.
\end{equation}

For the case $\eta=0$ we show that the density function $f^{(k)}(t)$ of the limiting distribution of $\sqrt{\frac{n}{2}}T^{(k,n)}$ as $n\rightarrow \infty$, satisfies
\begin{equation}
f^{(k)}(t)=\frac{k-2}{k-3}f^{(k-2)}(t)-\frac{\Gamma\left[\frac{k}{2}\right]}{\Gamma\left[\frac{k-1}{2}\right]}\frac{2t}{k-3}f^{(k-1)}(t),
\end{equation}
where
\begin{align*}
f^{(2)}(t)&=\frac{2}{\sqrt{\pi}}e^{-t^2},\\
f^{(3)}(t)&=\sqrt{\pi}\text{erfc}(t).
\end{align*}
Lastly we show that
\begin{equation}
\sqrt{nk}T^{(k,n)}\xrightarrow{d}\text{Exp}(1)\text{, as }n\rightarrow \infty\text{ and }k\rightarrow \infty.
\end{equation}

\subsection{Analytical Model}
Suppose we have a single-cell network consisting of $n$ nodes which are all perfect receivers and transmitters. Furthermore, we assume that all nodes are up to date and $\tau=\tau_h=1$ for all nodes. Lastly, we assume that the interval skew of the nodes is uniformly distributed, meaning each node has one interval starting at some time in the interval $[0,1)$ uniformly at random.

An important first observation we can make under these assumptions, is that the properly scaled process of nodes attempting to broadcast behaves as a Poisson process with rate $1$ as $n$ grows large. 
\begin{lemma}
Let $N_n$ be the point process of times that nodes attempt to broadcast in a single cell consisting of $n$ nodes with $\eta \in[0,1]$. Then if we dilate the  timescale by a factor $n$, the process $N_n$ converges weakly to a Poisson process with rate 1 as $n$ grows large.
\end{lemma}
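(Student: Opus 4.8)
\emph{Proof idea.} The plan is to reduce the claim to the textbook fact that a binomial point process converges to a Poisson process; the only genuine input from the model is the identification of the (undilated) attempt process. In the steady state all intervals have length $\tau_h=1$ and the attempt pattern is therefore $1$-periodic, so it is natural to view time as the circle $\mathbb{R}/\mathbb{Z}$. On this circle node $i$ makes one broadcast attempt per period, at time $s_i+\theta_i$, where by assumption the interval skews $s_i$ are i.i.d.\ uniform on $[0,1)$ and, independently, the within-interval offsets $\theta_i$ are i.i.d.\ uniform on $[\eta,1]$ (recall $\tau=\tau_h=1$). Since a uniform random shift maps any distribution on the circle to the uniform distribution, $X_i:=(s_i+\theta_i)\bmod 1$ is uniform on $[0,1)$, and the $X_i$ are independent across nodes. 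This is the only step that uses the hypotheses, and $\eta$ drops out of it, consistently with the lemma allowing any $\eta\in[0,1]$ (for $\eta=1$ the offset is deterministic, but $X_i=s_i$ is still uniform). Dilating the timescale by $n$ turns the attempt process into $\xi_n:=\sum_{i=1}^{n}\delta_{nX_i}$, i.e.\ $n$ i.i.d.\ points uniform on the circle of circumference $n$; it remains to show that $\xi_n$ converges weakly, in the vague topology on locally finite counting measures, to the rate-$1$ Poisson process $\xi$ on $[0,\infty)$.

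For this I would verify convergence of finite-dimensional distributions on a dissecting class. Fix disjoint bounded intervals $I_1,\dots,I_m$ of lengths $\ell_1,\dots,\ell_m$; for $n$ large the count vector $(\xi_n(I_1),\dots,\xi_n(I_m))$ is multinomial with $n$ trials and cell probabilities $\ell_j/n$, which by the multivariate law of rare events converges to a vector of independent Poisson variables with means $\ell_j$ --- exactly the corresponding finite-dimensional distribution of the rate-$1$ Poisson process. Since finite unions of bounded intervals form a dissecting ring and a Poisson process has no fixed atoms, this convergence upgrades to weak convergence of point processes by the standard criterion. (Alternatively one obtains a one-line proof via Laplace functionals: for continuous $g\ge0$ with compact support,
\[
\mathbb{E}\bigl[e^{-\xi_n(g)}\bigr]=\Bigl(1-\frac{1}{n}\!\int_0^{\infty}\!(1-e^{-g(v)})\,\text{d}v\Bigr)^{\!n}\longrightarrow \exp\Bigl(-\!\int_0^{\infty}(1-e^{-g(v)})\,\text{d}v\Bigr),
\]
the Laplace functional of the rate-$1$ Poisson process.) A third, more robust route that avoids committing to the i.i.d.-uniform reduction is to model each node as an independent stationary point process of unit intensity and apply a superposition-to-Poisson limit theorem: after dilation the per-node processes form a uniform null array whose intensity measures converge to Lebesgue measure.

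I do not expect a deep difficulty: the binomial-to-Poisson process limit is classical and the computations are routine. The point that really needs care is the reduction in the first paragraph --- one must genuinely work on the torus $\mathbb{R}/\mathbb{Z}$, because on the real line the undilated attempt times have a non-uniform (trapezoidal) density supported away from the origin when $\eta>0$, and a straight dilation of that would \emph{not} converge to a homogeneous Poisson process. The only other thing to keep straight is the mode of convergence: the limit lives on the unbounded half-line while each $\xi_n$ effectively lives on a window of length $n$, so one obtains, and only needs, convergence after restriction to compact sets, which is exactly what the vague topology supplies.
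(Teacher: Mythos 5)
Your argument is correct in substance but takes a more elementary route than the paper. The paper's proof treats each node's attempt times $B_i = (S+i)+T_i$, $i\in\mathbb{Z}$, as a simple stationary point process of unit intensity (the stationarity being verified by hand via the uniform skew $S$), and then invokes a general superposition-and-dilation theorem (Daley--Vere-Jones, Prop.~11.2.VI) to get the Poisson limit; this is exactly the third route you sketch at the end. Your primary route instead proves the special case directly: reduce to the torus $\mathbb{R}/\mathbb{Z}$, observe that the uniform skew makes the per-node attempt position uniform regardless of $\eta$, and run the binomial-to-Poisson limit via multinomial counts or Laplace functionals. What the paper's approach buys is that the only thing to check is stationarity of one node's process; what yours buys is self-containedness and an explicit identification of where $\eta$ drops out. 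One point to tighten: the identification of the dilated process with $\xi_n=\sum_{i=1}^n \delta_{nX_i}$ for $n$ i.i.d.\ uniform points is not literally exact, because a node redraws its timer $T_i$ every interval, so its attempts sit at \emph{different} circle positions in different periods, and a node may place $0$ or $2$ attempts in a given absolute unit window (its own intervals are not aligned with $[0,1)$). This does not break the proof --- what your multinomial/Laplace computation actually needs is only that (i) each node's attempt process has constant unit intensity on the circle (true, by the uniform-shift argument), (ii) the probability that one node places two attempts in a set of measure $O(1/n)$ is $o(1/n)$ (true, since consecutive attempts of a node are separated by at least $\eta$, and for $\eta=0$ the event has probability $O(n^{-2})$ per node), and (iii) nodes are independent --- but you should phrase the reduction in terms of these local statistics rather than as an exact equality of point processes.
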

\begin{proof}
See Appendix A.
\end{proof}
We now use Lemma 1 to analyze the message count. In light of Lemma 1, we treat the process of nodes attempting to broadcast as a Poisson process with rate $n$. We will determine the PDF and CDF of an inter-transmission time $T_P^{(k,n)}$ for this process, which we denote by $f^{(k,n)}(t)$ and $F^{(k,n)}(t)$ respectively. Recall, that by $T^{(k,n)}$ we denote an inter-transmission time for a given value of $k$ and cell-size $n$, for the original broadcasting process, which is not Poisson. We will use the fact that the distribution functions of $T_P^{(k,n)}$  and $T^{(k,n)}$ will behave the same for large $n$ (because of Lemma 1), to investigate the behavior of $T^{(k,n)}$. It is important to keep in mind that we are analyzing the Poisson broadcasting process, and all results for $T^{(k,n)}$ (and $N^{(k,n)}$) are only true asymptotically, for $n\rightarrow \infty$.
\subsubsection{{\large$k=1$}}
We first consider the case $k=1$. Suppose at time $0$ a broadcast occurs. Now assume at time $t$ a node ends its listening period and attempts to broadcast. In order for the node's transmission to not be suppressed, it must have started its listening period after time 0. If it started listening before this time, it would have heard the transmission at time $0$ and its broadcast would have been suppressed. This means that the node's broadcast will be successful if its corresponding timer was smaller than $t$. Since broadcasting times are picked uniformly in $[\eta,1]$, this probability is 0 if $t<\eta$ and $\frac{t-\eta}{1-\eta}$ otherwise. Hence, if we define $\lambda(t)$ to be the instantaneous rate at which successful broadcasts occur, we can write \[ 
\lambda(t)=
\left\{\begin{array}{ll}
0, & \hbox{$t<\eta,$}\\\\
\frac{n}{1-\eta}(t-\eta), & \hbox{$t \geq\eta$.} \end{array} \right.
\]
It is well known that the hazard rate $\lambda(t)=\frac{f^{(1,n)}(t)}{1-F^{(1,n)}(t)}$ uniquely determines $F^{(1,n)}(t)=\mathbb{P}[T_P^{(1,n)}\leq t]$ (see \cite{trick9}, Theorem 2.1):
\begin{equation}\label{densk1}
\resizebox{.88\linewidth}{!}{$
\displaystyle
F^{(1,n)}(t)=1-e^{-\int_0^t\lambda(u)\text{ d}u}=\left\{
\begin{array}{ll}
  0, & \hbox{$t<\eta$,}\\\\
  1-e^{-\frac{n}{2}\frac{(t-\eta)^2}{1-\eta}} , & \hbox{$t\geq \eta$.}\end{array} \right.
  $}
\end{equation}
Hence, $\sqrt{\frac{n}{1-\eta}}(T_P^{(1,n)}-\eta)$ is a Rayleigh distributed random variable with scale parameter $\sigma=1$. Therefore,
\[\mathbb{E}\left[T^{(1,n)}\right]\sim\mathbb{E}\left[T_P^{(1,n)}\right]=\eta+\sqrt{\frac{\pi(1-\eta)}{2n}}.\]
We conclude
\begin{equation}\label{exp1}
\resizebox{.88\linewidth}{!}{$
\displaystyle
\mathbb{E}\left[N^{(1,n)}\right]=\left(\mathbb{E}\left[T^{(1,n)}\right]\right)^{-1}\sim\left(\sqrt{\frac{\pi(1-\eta)}{2n}}+\eta\right)^{-1}.
$}
\end{equation}
Hence, for the case $\eta=0$ we find $\mathbb{E}[N^{(1,n)}]\sim\sqrt{\frac{2n}{\pi}}$. This proves the claim from \cite{trick3}, that $\mathbb{E}[N^{(1,n)}]=\mathcal{O}(\sqrt{n})$ when no listen-only period is used. 
For $\eta>0$ we get from (\ref{exp1}) that
\[\mathbb{E}\left[N^{(1,n)}\right]\sim \frac{1}{\eta}-\frac{1}{\eta^2}\sqrt{\frac{\pi(1-\eta)}{2n}}+\mathcal{O}(n^{-1}).\]
This implies $\mathbb{E}[N^{(1,n)}]\uparrow \frac{1}{\eta}$ from below, proving the claim that introducing a listen-only period bounds the number of transmissions per interval by a constant.
\subsubsection{{\large$k=2$}}
Let us now look at the case $k=2$. We can apply a similar reasoning. Suppose again that at time $0$ a broadcast occurs and let $-T_{-1}$  be the time of the last broadcast before time $0$. Now similarly as before, a broadcasting attempt at time $t$ will be successful, if the corresponding node started its listening interval after time 
$-T_{-1}$. Hence, the instantaneous rate at which successful broadcasts occur conditioned on $T_{-1}=\nu$ is given by 
\begin{equation}\label{lambda}
\lambda(t\text{ }\vert\text{ }\nu)=
\left\{\begin{array}{ll}
0, & \hbox{$t+\nu<\eta,$}\\\\
\frac{n}{1-\eta}(t+\nu-\eta), & \hbox{$t+\nu\geq\eta$.} \end{array} \right.
\end{equation}
Hence,
\begin{equation}\label{trans}
\resizebox{.88\linewidth}{!}{$
\displaystyle
\begin{array}{c}
\mathbb{P}\left[T_P^{(2,n)}\leq t\text{ }\big\vert\text{ }T_{-1}=\nu\right]=1-\exp\left[-\int_0^t \lambda(u\text{ }\vert\text{ }\nu)\text{d}u\right]\\\\
=\begin{cases}
 0, & t+\nu<\eta, \\
 1-\exp\left[-\frac{n(t+\nu-\eta)^2}{2(1-\eta)}\right], & \nu<\eta\text{ }\wedge\text{ }t+\nu\geq\eta, \\
1-\exp\left[-\frac{n(t^2/2+t(\nu-\eta))}{1-\eta}\right], & \nu\geq\eta.  \end{cases}
\end{array}
$}
\end{equation}
This tells us that for $k=2$ the sequence of consecutive inter-transmission times $\boldsymbol{T_P^{(2,n)}}=\{T_{P,i}^{(2,n)}\}_{i=0}^\infty$ forms a Markov chain with transition probabilities as in (\ref{trans}). Clearly the chain is $\phi$-irreducible, that is every set $A\subset \mathbb{R}^+$ with Lebesgue measure $\mu_L(A)>0$ will eventually be reached, since we can reach any state within two steps. Furthermore, the chain is Harris recurrent since every set $A\subset\mathbb{R}^+$ with $\mu_L(A)>0$ will eventually be reached. Then, since the chain is also aperiodic, the following theorem applies (see \cite{trick10}, Theorem 13.0.1):
\begin{theorem}
Suppose $\boldsymbol{X}$ is a Markov chain. Suppose it is $\phi$-irreducible, Harris recurrent and aperiodic and it admits a finite invariant probability measure $f(t)$, then this measure is unique.
\end{theorem}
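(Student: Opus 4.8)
The plan is to prove this via the regeneration (Nummelin splitting) technique, which is the route taken in \cite{trick10} and which reduces everything to an expected--occupation--time identity. The guiding idea is to manufacture a single point that the chain visits recurrently, to express every invariant measure in terms of the expected time the chain spends in a set between consecutive visits to that point, and then to observe that this representation pins the measure down up to a single scalar, which the normalization $\mu(\mathbb{R}^+)=1$ fixes.

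Concretely, I would first use $\phi$-irreducibility to produce a small set: an integer $m\geq 1$, a constant $\delta>0$, a set $C$ with $\phi(C)>0$, and a probability measure $\nu$ on $C$ with $P^m(x,\cdot)\geq \delta\,\mathbf{1}_C(x)\,\nu(\cdot)$. For the chain $\boldsymbol{T_P^{(2,n)}}$ of interest this is immediate from (\ref{trans}), whose transition density is bounded below on every compact rectangle, so any bounded interval is small with $m=1$. Next I would perform the Nummelin splitting, enlarging the state space so that the split chain $\check{\boldsymbol{X}}$ contains an accessible atom $\alpha$; the split chain inherits $\phi$-irreducibility, aperiodicity and Harris recurrence, and invariant measures of $\boldsymbol{X}$ correspond bijectively to those of $\check{\boldsymbol{X}}$, so it suffices to treat $\check{\boldsymbol{X}}$. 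Harris recurrence makes $\alpha$ a recurrent atom, so the return time $\tau_\alpha$ to $\alpha$ is almost surely finite, and for any invariant measure $\mu$ one has $\mu(A)=\mu(\alpha)\,\mathbb{E}_\alpha\!\left[\sum_{j=0}^{\tau_\alpha-1}\mathbf{1}\{\check{X}_j\in A\}\right]$. Taking $A$ to be the whole space in this identity gives $1=\mu(\alpha)\,\mathbb{E}_\alpha[\tau_\alpha]$ when $\mu$ is a probability measure, which fixes $\mu(\alpha)$ and hence $\mu$; projecting back to the original state space yields uniqueness of $f$.

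I expect the main obstacle to be the general measure--theoretic bookkeeping of the splitting --- checking that the lifted measure is genuinely invariant for the split kernel and that the occupation--measure identity holds in the stated generality --- rather than anything specific to our application, where (\ref{trans}) is explicit and the small set is transparent. A cleaner alternative, which sidesteps the representation formula, is a coupling argument: Harris recurrence together with aperiodicity yields $\| \mu_1 P^j-\mu_2 P^j\|_{\text{TV}}\to 0$ for arbitrary initial laws $\mu_1,\mu_2$, and since invariance gives $\mu_i P^j=\mu_i$ for all $j$, this forces $\mu_1=\mu_2$. Either way, aperiodicity is not strictly needed for uniqueness on its own; it is kept among the hypotheses because the same result in \cite{trick10} simultaneously gives convergence in total variation to $f$, which we rely on later.
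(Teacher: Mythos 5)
The paper does not prove this statement at all: it is quoted verbatim as a known result (Theorem 13.0.1 of \cite{trick10}, Meyn and Tweedie) and used as a black box to conclude that the invariant density of the inter-transmission-time chain, once exhibited, is unique. So there is no in-paper proof to compare against; what you have written is a condensed but essentially correct account of the standard textbook argument from the cited reference. The chain of ideas --- existence of a small set from $\phi$-irreducibility, Nummelin splitting to create an accessible atom, Harris recurrence making the return time $\tau_\alpha$ a.s.\ finite, the occupation-measure representation $\mu(A)=\mu(\alpha)\,\mathbb{E}_\alpha[\sum_{j=0}^{\tau_\alpha-1}\mathbf{1}\{\check X_j\in A\}]$ for \emph{every} invariant $\mu$, and normalization fixing $\mu(\alpha)$ --- is exactly how Meyn--Tweedie establish uniqueness (their Chapter 10), and your remark that aperiodicity is superfluous for uniqueness and is carried along only because Theorem 13.0.1 also delivers total-variation convergence is accurate. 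The coupling alternative via Orey's theorem is also sound. Be aware, though, that the occupation-measure identity you invoke is not mere bookkeeping: proving that an \emph{arbitrary} invariant measure coincides with the atom-based one (rather than merely dominating it) is the last-exit-decomposition argument that constitutes the real content of the uniqueness theorem, so a complete write-up would have to reproduce it.

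One concrete inaccuracy in your parenthetical about the application: it is not true that the transition density of $\boldsymbol{T_P^{(2,n)}}$ is bounded below on every compact rectangle, hence not true that any bounded interval is small with $m=1$. By (\ref{trans}) the conditional density vanishes identically on $\{t+\nu<\eta\}$, so for $\eta>0$ a one-step minorization fails on intervals near the origin; you need either $m=2$ (as the paper implicitly uses when it notes the chain reaches any set within two steps) or a small set bounded away from $0$, e.g.\ $C=[2\eta,3\eta]$. This does not affect the general theorem, whose hypotheses already guarantee the existence of some small set, but the claim as written would not survive scrutiny for the specific chain with $\eta>0$.
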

Hence, if our chain $\boldsymbol{T_P^{(2,n)}}$ admits an invariant distribution $f^{(2,n)}(t)$, it is the unique stationary distribution and it should satisfy
\begin{equation}\label{eq1}
F^{(2,n)}(t)= \int_0^\infty \mathbb{P}\left[T_P^{(2,n)}\leq t\text{ }\big\vert\text{ }T_{-1}=\nu\right]\text{ d}F^{(2,n)}(\nu).
\end{equation}
Substituting Equation (\ref{trans}) into (\ref{eq1}) and differentiating with respect to $t$ gives an integral equation for $f^{(2,n)}(t)$:
\begin{equation}\label{inteq2}
\resizebox{.88\linewidth}{!}{$\displaystyle
f^{(2,n)}(t)=\int_0^\infty f^{(2,n)}(\nu)\lambda(t\text{ }\vert\text{ }\nu)\exp\left[-\int_0^t\lambda(u\text{ }\vert\text{ }\nu)\text{d}u\right]\text{d}\nu.
$
}
\end{equation}
By substituting Equation (\ref{lambda}) we can write the right-hand side of Equation (\ref{inteq2}) for $t<\eta$ as
\begin{equation}\label{inteq3}
\resizebox{.88\linewidth}{!}{
$
\begin{array}{ll}
 &\displaystyle\int_{\eta-t}^\eta f^{(2,n)}(\nu)  \frac{n(t+\nu-\eta)}{1-\eta}\exp\left[-\frac{n(t+\nu-\eta)^2}{2(1-\eta)}\right]\text{d}\nu\vspace{.2cm}\\
+&\displaystyle\int_{\eta}^\infty f^{(2,n)}(\nu)  \frac{n(t+\nu-\eta)}{1-\eta}\exp\left[-\frac{n(t^2/2+t(\nu-\eta))}{1-\eta}\right]\text{d}\nu,
\end{array}
$
}
\end{equation}
and for $t \geq\eta$ we get
\begin{equation}\label{inteq4}
\resizebox{.88\linewidth}{!}{$
\begin{array}{ll}
&\displaystyle\int_{0}^\eta f^{(2,n)}(\nu)  \frac{n(t+\nu-\eta)}{1-\eta}\exp\left[-\frac{n(t+\nu-\eta)^2}{2(1-\eta)}\right]\text{d}\nu\vspace{.2cm}\\
+&\displaystyle\int_{\eta}^\infty f^{(2,n)}(\nu)  \frac{n(t+\nu-\eta)}{1-\eta}\exp\left[-\frac{n(t^2/2+t(\nu-\eta))}{1-\eta}\right]\text{d}\nu.
\end{array}
$}
\end{equation}
We show that the solution to this equation is given by
\begin{equation}\label{sol1}
f^{(2,n)}(t)=\begin{cases} C_{(2,n)} ,& t<\eta,\\
C_{(2,n)} \exp\left[-\frac{n(t-\eta)^2}{2(1-\eta)}\right],& t\geq\eta,
\end{cases}
\end{equation}
where $C_{(2,n)}=\left(\eta+\sqrt{\frac{(1-\eta)\pi}{2n}}\right)^{-1}$ is a normalization constant. 
Substitution of (\ref{sol1}) reduces (\ref{inteq3}) for $t<\eta$ to
\begin{gather*}
\displaystyle
C_{(2,n)} \int_{\eta-t}^\infty   \frac{n(t+\nu-\eta)}{1-\eta}\exp\left[-\frac{n(t+\nu-\eta)^2}{2(1-\eta)}\right]\text{d}\nu
\\ 
=C_{(2,n)} \int_{0}^\infty  z \exp\left[-\frac{z^2}{2}\right]\text{d}z = C_{(2,n)}.
\end{gather*}
Similarly, for $t\geq\eta$ substitution of (\ref{sol1}) reduces Equation (\ref{inteq4}) to
\begin{gather*} \displaystyle
C_{(2,n)} \int_{0}^\infty   \frac{n(t+\nu-\eta)}{1-\eta}\exp\left[-\frac{n(t+\nu-\eta)^2}{2(1-\eta)}\right]\text{d}\nu\\
\resizebox{1\linewidth}{!}{$\displaystyle
=C_{(2,n)} \int_{\frac{\sqrt{n}(t-\eta)}{\sqrt{1-\eta}}}^\infty  z \exp\left[-\frac{z^2}{2}\right]\text{d}z = C_{(2,n)} \exp\left[-\frac{n(t-\eta)^2}{2(1-\eta)}\right].
$}
\end{gather*}
Thus, we have shown that $f^{(2,n)}(t)$ as given in Equation (\ref{sol1}) is the unique solution to (\ref{inteq2}). Calculating the expectation of $T_P^{(2,n)}$ we conclude
\begin{equation}\label{exp2}
\mathbb{E}\left[T^{(2,n)}\right]\sim\mathbb{E}\left[T_P^{(2,n)}\right]=\frac{\left(\frac{\eta^2}{2}+\eta\sqrt{\frac{\pi(1-\eta)}{2n}}+\frac{1-\eta}{n}\right)}{\left(\eta+\sqrt{\frac{\pi(1-\eta)}{2n}}\right)}.
\end{equation}
For $\eta=0$ this gives $\mathbb{E}[N^{(2,n)}]\sim\sqrt{\frac{\pi n}{2}}$, which is again $\mathcal{O}(\sqrt{n})$. If a listen-only period is used, i.e., $\eta>0$, then from (\ref{exp2}) we get
\[\mathbb{E}\left[N^{(2,n)}\right]\sim \frac{2}{\eta}-\frac{1}{\eta^2}\sqrt{\frac{2\pi(1-\eta)}{n}}+\mathcal{O}(n^{-1}),\]
implying $\mathbb{E}[N^{(2,n)}]\uparrow \frac{2}{\eta}$ from below as $n$ grows large.
\subsubsection{General {\large$k\geq2$}}
We move on to the more general case $k\geq 2$. Again suppose that at time $0$ a broadcast occurs and let $-T_{-(k-1)}$  be the time of the $(k-1)$th broadcast before time $0$. Now similarly as before, a broadcasting attempt at time $t$ will be successful, if the corresponding node started its listening interval after time 
$-T_{-(k-1)}$. Hence, the instantaneous rate at which broadcasts occur conditioned on $T_{-(k-1)}=\nu$ is again given by Equation (\ref{lambda}).
Thus, the sequence $\boldsymbol{T_P^{(k,n)}}=\left\{\left(T^{(k,n)}_{P,i},\text{ ..., }T^{(k,n)}_{P,i+k-2}\right)\right\}_{i=0}^\infty$ forms a Markov chain. Since this chain is able to move from any state to any other state within $k$ steps, it is easily seen that again Theorem 1 applies. Consequently, if it admits an invariant probability measure, it is unique. Let us denote the steady-state joint probability density function of $k-1$ consecutive inter-transmission times by $\tilde{f}^{(k,n)}(t_1,\text{ ..., }t_{k-1})$. Similar to (\ref{inteq2}), $\tilde{f}^{(k,n)}(t_1,\text{ ..., }t_{k-1})$ can then be written as
\begin{equation}\label{inteq5}
\resizebox{.88\linewidth}{!}{$\displaystyle
\begin{split}\tilde{f}^{(k,n)}(t_1,\text{..., }t_{k-1})=\int_0^\infty \tilde{f}^{(k,n)}(t_2, &\text{ ..., } t_k)\lambda\left(t_1\text{ }\big\vert\text{ }\sum\limits_{i=2}^{k}t_i\right)&\\\
&\cdot\exp\left[-\int_0^{t_1}\lambda\left(u\text{ }\big\vert\text{ }\sum\limits_{i=2}^{k}t_i\right)\text{d}u\right]\text{d}t_k.\end{split}$}
\end{equation}
Consequently, $\tilde{f}^{(k,n)}(t_1,\text{ ..., }t_{k-1})$ has the same form as the solution in (\ref{sol1}) and is given by
\begin{equation}\label{sol2}
\resizebox{.88\linewidth}{!}{$\displaystyle
\tilde{f}^{(k,n)}(t_1, \text{ ..., }t_{k-1})=\begin{cases} C_{(k,n)} ,& \sum\limits_{i=1}^{k-1}t_i<\eta,\\
C_{(k,n)} \exp\left[-\frac{n}{2(1-\eta)}\left(\sum\limits_{i=1}^{k-1}t_i-\eta\right)^2\right],& \sum\limits_{i=1}^{k-1}t_i\geq\eta.
\end{cases}
$}
\end{equation}
Here the normalization constant $C_{(k,n)}$ satisfies
\begin{equation}\label{norm}
\resizebox{.88\linewidth}{!}{$\displaystyle
C_{(k,n)}=\left(\frac{\eta^{k-1}}{(k-1)!}+\int_\eta^\infty \frac{t^{k-2}}{(k-2)!}\exp\left[-\frac{n (t-\eta)^2}{2(1-\eta)}\right]\text{d}t\right)^{-1},
$}
\end{equation}
where we have used the following lemma:
\begin{lemma}
Let $k\in\mathbb{N}$ and $F(x)$ be a positive real-valued integrable function. Assume that $\int_0^\infty x^k F(x)\text{d}x < \infty $, then
\[ \int_0^\infty\cdots\int_0^\infty F\left(\sum_{i=1}^{k+1}x_i\right)\text{d}x_1\cdots\text{d}x_{k+1}=\int_0^\infty\frac{x^k}{k!}F(x)\text{d}x.\]
\end{lemma}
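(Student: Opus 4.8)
The plan is to argue by induction on $k$, interchanging orders of integration freely by Tonelli's theorem, which is legitimate throughout since $F$ is nonnegative. As a convenient anchor, note that for $k=0$ the asserted identity reduces to $\int_0^\infty F(x_1)\,\text{d}x_1=\int_0^\infty F(x)\,\text{d}x$, which is trivial.

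For the inductive step, assume the identity holds with $k$ replaced by $k-1$ (for some $k\geq1$). In the $(k+1)$-fold integral on the left, I would hold $x_{k+1}$ fixed and apply the induction hypothesis to the remaining $k$-fold integral, with the nonnegative integrable function $x\mapsto F(x+x_{k+1})$ in the role of $F$; this gives
\[\int_0^\infty\!\!\cdots\!\!\int_0^\infty F\!\left(\sum_{i=1}^{k+1}x_i\right)\text{d}x_1\cdots\text{d}x_k=\int_0^\infty\frac{y^{k-1}}{(k-1)!}\,F(y+x_{k+1})\,\text{d}y.\]
Integrating over $x_{k+1}\in(0,\infty)$ and substituting $s=y+x_{k+1}$ in the inner integral turns the right-hand side into $\int_0^\infty\frac{y^{k-1}}{(k-1)!}\int_y^\infty F(s)\,\text{d}s\,\text{d}y$. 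Swapping the order of integration over the region $0<y<s<\infty$ and using $\int_0^s\frac{y^{k-1}}{(k-1)!}\,\text{d}y=\frac{s^k}{k!}$, this equals $\int_0^\infty\frac{s^k}{k!}F(s)\,\text{d}s$, which closes the induction.

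As a remark I would mention a slicker alternative that avoids induction altogether: the single change of variables $s_j=x_1+\cdots+x_j$ for $j=1,\dots,k+1$. This map is lower-triangular with unit Jacobian and sends the positive orthant bijectively onto $\{0<s_1<\cdots<s_{k+1}\}$; integrating out $s_1,\dots,s_k$ over this simplex (whose $k$-dimensional volume is $s_{k+1}^k/k!$) immediately produces $\int_0^\infty\frac{s_{k+1}^k}{k!}F(s_{k+1})\,\text{d}s_{k+1}$.

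I do not expect a genuine obstacle here. The only delicate point is the repeated interchange of integrals and of integration order, which is uniformly justified by Tonelli's theorem because $F\geq0$, while the hypothesis $\int_0^\infty x^k F(x)\,\text{d}x<\infty$ guarantees that every integral encountered is finite, so that the manipulations are not merely formal.
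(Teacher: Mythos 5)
Your proof is correct. Your primary argument (induction on $k$) is a genuinely different route from the paper's: the paper performs a single change of variables to $x=\sum_{i=1}^{k+1}x_i$ and then quotes the known fact that the $k$-dimensional simplex $\{x_1+\cdots+x_k\leq x,\ x_i\geq 0\}$ has volume $x^k/k!$, citing a reference for that volume. Your induction avoids any appeal to that fact: the factor $s^k/k!$ emerges from the elementary computation $\int_0^s \frac{y^{k-1}}{(k-1)!}\,\text{d}y=\frac{s^k}{k!}$ in the inductive step, so the argument is fully self-contained (and in effect re-proves the simplex-volume formula along the way), at the cost of being slightly longer. The alternative you sketch in your closing remark --- the unit-Jacobian substitution $s_j=x_1+\cdots+x_j$ followed by integrating out $s_1,\dots,s_k$ over the ordered simplex --- is essentially the paper's proof in a different parametrization. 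Your point about Tonelli is well taken and is in fact slightly more careful than the paper, which does not comment on the interchange of integration order; one could add that since $F\geq 0$ the identity holds in $[0,\infty]$ unconditionally, with the hypothesis $\int_0^\infty x^kF(x)\,\text{d}x<\infty$ (together with integrability of $F$) only needed to conclude finiteness of both sides.
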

\begin{proof}
By writing $x=\sum_{i=1}^{k+1}x_i$ and a change of variables, we can write
\[\begin{array}{ll}&\displaystyle \int_0^\infty\cdots\int_0^\infty F\left(\sum_{i=1}^{k+1}x_i\right)\text{d}x_1\cdots\text{d}x_{k+1}\vspace{0.2cm}\\
\displaystyle=&\displaystyle\int_0^\infty F(x) \underset{\resizebox{.25\linewidth}{!}{$\begin{array}{c} x_1+\cdots+x_k\leq x\\ x_i\geq 0\text{ for }1\leq i \leq k\end{array}$}}{\int \cdots \int} \text{d}x_1\cdots\text{d}x_{k}\text{ d}x.\\\end{array}
\]
The inner $k$-tuple integral is equal to the volume of the $k$-dimensional simplex $\{(x_1,\cdots,x_k)\text{ }\vert\text{ }\sum_{i=1}^{k}x_i\leq x\text{ and }x_i\geq 0\text{ for }1\leq i \leq k\}$, which is known to be $x^k/k!$ \cite{trick18}. Applying this result completes the proof.
\end{proof}
Alternatively, by a change of variables and splitting the integral, we can write (\ref{norm}) in terms of a finite sum as
\begin{equation}\nonumber
\resizebox{1\linewidth}{!}{$\displaystyle
\displaystyle C_{(k,n)}=\left(\frac{\eta^{k-1}}{(k-1)!}+\frac{1}{2(k-2)!}\sum_{i=0}^{k-2} \binom{k-2}{i} \eta^{k-i-2}\left(\frac{2(1-\eta)}{n}\right)^{\frac{i+1}{2}}\Gamma\left[\frac{i+1}{2}\right]\right)^{-1},
$}
\end{equation}
which more clearly reveals the role of some of the parameters. 

There are two important observations we can make regarding the normalization constant $C_{(k,n)}$. First, for $\eta=0$ (\ref{norm}) reduces to $C_{(k,n)}=\pi^{-\frac{1}{2}}(2n)^{\frac{k-1}{2}}\Gamma[k/2]$. Secondly, for $\eta>0$ we have that $C_{(k,n)} \downarrow \frac{(k-1)!}{\eta^{k-1}}$ as $n$ grows large.

Let us look at the sequence $\boldsymbol{\Sigma^{(k,n)}}=\{\sum_{j=0}^{k-2}T^{(k,n)}_{P,i+j}\}_{i=0}^\infty$, i.e., the sequence of the sum of $k-1$ consecutive inter-transmissions times. Integrating Equation (\ref{sol2}) and applying Lemma 2, we can easily determine its steady-state probability density function:
\begin{equation}\label{sol3}
\resizebox{.88\linewidth}{!}{$\displaystyle
f_\Sigma^{(k,n)}(s)=\begin{cases} \frac{C_{(k,n)}}{(k-2)!}s^{k-2} ,& s<\eta,\\
\frac{C_{(k,n)}}{(k-2)!} s^{k-2} \exp\left[-\frac{n(s-\eta)^2}{2(1-\eta)}\right],& s\geq\eta.
\end{cases}
$}
\end{equation}
This lets us deduce that for large $n$:
\begin{equation}\label{expec}
\mathbb{E}\left[T^{(k,n)}\right]\sim\frac{1}{k-1}\int_0^\infty s f_\Sigma^{(k,n)}(s)\text{d}s=\frac{C_{(k,n)}}{C_{(k+1,n)}}.
\end{equation}
For the case $\eta=0$ this implies
\begin{equation}\label{expec1}
\mathbb{E}\left[N^{(k,n)}\right]=\left(\mathbb{E}\left[T^{(k,n}\right]\right)^{-1}\sim \sqrt{2n}\frac{\Gamma\left[\frac{k+1}{2}\right]}{\Gamma\left[\frac{k}{2}\right]},
\end{equation}
which is again $\mathcal{O}(\sqrt{n})$.
When a listen-only period is used, i.e., $\eta>0$, 
\begin{equation}\label{expec2}
\mathbb{E}\left[N^{(k,n)}\right]\sim \frac{k}{\eta}-\frac{k}{\eta^2}\sqrt{\frac{\pi(1-\eta)}{2n}}+\mathcal{O}(n^{-1}),
\end{equation}
implying that $\mathbb{E}[N^{(k,n)}]\uparrow \frac{k}{\eta}$ from below as $n\rightarrow \infty$.

Moreover, the function $f_\Sigma^{(k,n)}(s)$ allows us to determine the steady-state probability density function $f^{(k,n)}(t)$. Analogously to Equation (\ref{eq1}), we can write
\begin{equation}\label{eq2}
\resizebox{.88\linewidth}{!}{$\begin{array}{rcl}\displaystyle
\displaystyle F^{(k,n)}(t)&=&\displaystyle\int_0^\infty \mathbb{P}\left[T_P^{(k,n)}\leq t\text{ }\big\vert\text{ }T_{-(k-1)}=\nu\right]\text{ d}\mathbb{P}[T_{-(k-1)}\leq\nu]\\\\\displaystyle
&=&\displaystyle\int_0^\infty \left(1-\exp\left[-\int_0^t \lambda(u\text{ }\vert\text{ }\nu)\text{ d}u\right]\right)f_\Sigma^{(k,n)}(\nu)\text{ d}\nu.
\end{array}
$}
\end{equation}
Differentiating (\ref{eq2}) with respect to $t$ and substituting Equation (\ref{sol3}) gives
\begin{equation}\label{dens}
\resizebox{.88\linewidth}{!}{$\displaystyle
f^{(k,n)}(t)=\frac{C_{(k,n)}}{(k-2)!}\int_{0}^\infty\lambda(t\text{ }\vert\text{ }\nu)\nu^{k-2}\exp\left[-\frac{n(t+\nu-\eta)^2}{2(1-\eta)}\right]\text{d}\nu.
$}
\end{equation}
Substitution of Equation (\ref{lambda}) reduces the right-hand side of (\ref{dens}) for $t<\eta$ to
\begin{equation}\label{dens2}
\resizebox{.88\linewidth}{!}{$\displaystyle
\frac{n}{1-\eta}\frac{C_{(k,n)}}{(k-2)!}\int_{\eta-t}^\infty(t+\nu-\eta)\nu^{k-2}\exp\left[-\frac{n(t+\nu-\eta)^2}{2(1-\eta)}\right]\text{d}\nu,
$}
\end{equation}
and for $t\geq \eta$ the right-hand side reduces to
\begin{equation}\label{dens3}
\resizebox{.88\linewidth}{!}{$\displaystyle
\frac{n}{1-\eta}\frac{C_{(k,n)}}{(k-2)!}\int_{0}^\infty(t+\nu-\eta)\nu^{k-2}\exp\left[-\frac{n(t+\nu-\eta)^2}{2(1-\eta)}\right]\text{d}\nu.
$}
\end{equation}

Lastly, Equation (\ref{eq2}) in combination with Equations (\ref{lambda}) and (\ref{sol3}) allows to calculate the moments of $T_P^{(k,n)}$as follows:
\begin{equation}\label{mom0}
\resizebox{.88\linewidth}{!}{$\begin{array}{c}\displaystyle
\mathbb{E}\left[\left(T_P^{(k,n)}\right)^j\right]= j\int_0^\infty t^{j-1} (1-F^{(k,n)}(t))\text{ d}t=\\\\\displaystyle
\frac{j C_{(k,n)}}{(k-2)!} \int_0^\infty \int_0^\infty t^{j-1}v^{k-2}\exp\left[-\frac{n\max[(t+v-\eta),0]^2}{2(1-\eta)}\right]\text{ d}t\text{d}\nu.
\end{array}
$}
\end{equation}
Here we have used the $\max$ function, so we do not have to split the integral for the separate cases $t+\nu<\eta$ and $t+\nu\geq\eta$. In this form, Equation (\ref{mom0}) can be rewritten with the use of the following lemma:
\begin{lemma}
Let $k\in\mathbb{N}$ and $F(x)$ be a positive real-valued integrable function. Assume that $\int_0^\infty x^{k+j+1} F(x)\text{d}x < \infty $, then
\[\resizebox{\linewidth}{!}{$\displaystyle\int_0^\infty \int_0^\infty x^j y^k F(x+y)\text{ d}x\text{d}y=\frac{k!j!}{(k+j+1)!}\int_0^\infty z^{k+j+1}F(z)\text{ d}z.$}\]
\end{lemma}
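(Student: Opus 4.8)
The plan is to reduce the two-dimensional integral to a one-dimensional one by the substitution $z=x+y$, exactly as in the proof of Lemma 2, but now carrying along the polynomial weight $x^j y^k$. Since $F$ is positive, Tonelli's theorem lets us freely rewrite and reorder the iterated integrals without any integrability check beyond the one stated, and the hypothesis $\int_0^\infty x^{k+j+1}F(x)\,\mathrm{d}x<\infty$ will guarantee that the resulting expression is finite.

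First I would fix $x$ and change the inner variable from $y$ to $z=x+y$, so that for each fixed $x$ the $y$-integral becomes $\int_x^\infty x^j (z-x)^k F(z)\,\mathrm{d}z$. Swapping the order of integration over the region $\{0\le x\le z\}$ (justified by Tonelli) turns the double integral into
\[
\int_0^\infty F(z)\left(\int_0^z x^j (z-x)^k\,\mathrm{d}x\right)\mathrm{d}z.
\]
The inner integral is a Beta integral: the substitution $x=zu$ gives $z^{j+k+1}\int_0^1 u^j(1-u)^k\,\mathrm{d}u=z^{j+k+1}B(j+1,k+1)$, and since $j$ and $k$ are non-negative integers, $B(j+1,k+1)=\Gamma[j+1]\Gamma[k+1]/\Gamma[j+k+2]=j!\,k!/(k+j+1)!$.

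Substituting this back yields $\frac{k!\,j!}{(k+j+1)!}\int_0^\infty z^{k+j+1}F(z)\,\mathrm{d}z$, which is the claimed identity, finite precisely by the stated hypothesis. I do not expect any genuine obstacle here; the only points needing a word of care are the appeal to Tonelli (immediate from positivity of $F$) and the evaluation of the Beta integral with integer parameters. As an alternative route one could iterate Lemma 2: write $x^j/j!$ and $y^k/k!$ as volumes of simplices, express both as integrals over $j$ and $k$ auxiliary variables, and then merge all $k+j+2$ variables into a single $(k+j+1)$-dimensional simplex whose volume is $z^{k+j+1}/(k+j+1)!$; this gives the same constant $\tfrac{k!j!}{(k+j+1)!}$ organized combinatorially rather than via the Beta function.
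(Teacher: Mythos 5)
Your proof is correct, and it shares the paper's overall skeleton (substitute $z=x+y$, then swap the order of integration over the region $\{0\le x\le z\}$, justified by positivity of $F$), but the key evaluation step is genuinely different. The paper expands $(z-y)^j$ by the binomial theorem before interchanging the integrals, which leaves the alternating sum $\sum_{i=0}^j(-1)^i\binom{j}{i}\frac{1}{1+i+k}$ to be identified with $\frac{k!j!}{(k+j+1)!}$ via a cited identity on reciprocals of binomial coefficients. You instead keep the inner integral $\int_0^z x^j(z-x)^k\,\mathrm{d}x$ intact and evaluate it directly as the Beta integral $z^{j+k+1}B(j+1,k+1)=z^{j+k+1}\frac{j!\,k!}{(k+j+1)!}$. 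Your route is self-contained and arguably cleaner, since it avoids both the binomial expansion and the appeal to an external combinatorial identity; the paper's route buys nothing extra here beyond staying within elementary term-by-term integration. Your remark that the constant can alternatively be obtained by iterating the simplex-volume argument of Lemma 2 is also correct and nicely explains why the same factorial structure appears in both lemmas.
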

\begin{proof}
Write $z=x+y$, then
\[\begin{array}{rl}&\displaystyle 
\int_0^\infty \int_0^\infty x^j y^k F(x+y)\text{ d}x\text{ d}y\vspace{.1cm}\\
 \displaystyle=&\displaystyle \int_0^\infty \int_y^\infty (z-y)^j y^k F(z)\text{ d}z\text{ d}y\\
 \displaystyle=&\displaystyle \sum_{i=0}^j(-1)^i \binom{j}{i}\int_0^\infty \int_y^\infty z^{j-i}y^{k+i}F(z)\text{ d}z\text{ d}y\vspace{.1cm}\\
 \displaystyle=&\displaystyle \sum_{i=0}^j(-1)^i \binom{j}{i}\int_0^\infty z^{j-i}F(z) \int_0^z y^{k+i}\text{ d}y\text{ d}z\vspace{.1cm}\\\vspace{.1cm}
 \displaystyle=&\displaystyle\left(\sum_{i=0}^j(-1)^i \binom{j}{i}\frac{1}{1+i+k}\right)\int_0^\infty z^{1+k+j}F(z)\text{ d}z\\
 \displaystyle=&\displaystyle\frac{k!j!}{(k+j+1)!}\int_0^\infty z^{k+j+1}F(z)\text{ d}z.
\end{array}
\]
The last equality follows from a known identity involving the reciprocal of binomial coefficients (see \cite{trick16}, Corollary 2.2).
\end{proof}
Applying Lemma 3 to Equation (\ref{mom0}) gives
\begin{equation}\nonumber
\resizebox{1\linewidth}{!}{$
\begin{array}{c}\displaystyle
\frac{j!C_{(k,n)}}{(k+j-1)!}\int_0^\infty z^{k+j-1}\exp\left[-\frac{n\max[(z-\eta),0]^2}{2(1-\eta)}\right]\text{ d}z\vspace{0.1cm}\\
\displaystyle
=j!C_{(k,n)}\left(\frac{\eta^{k+j-1}}{(k+j-1)!}+\int_\eta^\infty \frac{z^{k+j-2}}{(k+j-2)!}\exp\left[-\frac{n(z-\eta)^2}{2(1-\eta)}\right]\text{ d}z\right).\\
\end{array}
$}
\end{equation}
Finally, using (\ref{norm}), we obtain the following insightful expression for the $j$th moment of an inter-transmission time
\begin{equation}\label{mom}
\mathbb{E}\left[\left(T^{(k,n)}\right)^j\right]\sim\mathbb{E}\left[\left(T_P^{(k,n)}\right)^j\right]= j!\frac{C_{(k,n)}}{C_{(k+j,n)}}.
\end{equation}
For the special case $\eta=0$ this reduces to
\begin{equation}\label{mom1}
\mathbb{E}\left[\left(T^{(k,n)}\right)^j\right]\sim\frac{j!}{(2n)^{\frac{j}{2}}}\frac{\Gamma\left[\frac{k}{2}\right]}{\Gamma\left[\frac{k+j}{2}\right]}.
\end{equation}
For the case $\eta>0$ we deduce
\begin{equation}\label{mom2}
\mathbb{E}\left[\left(T^{(k,n)}\right)^j\right]\rightarrow\frac{(k-1)!j!}{(k+j-1)!}\eta^j,\text{ as }n\rightarrow \infty.
\end{equation}
\subsubsection{Limiting Distributions}
The previous analysis allows us to determine the limiting distributions of $T^{(k,n)}$ as $n$ and $k$ grow large. We distinguish between two cases.
\\ \newline\noindent
\textbf{Case 1: $\mathbf{\boldsymbol{\eta}>0}$.}

First, Equation (\ref{mom2}) implies that for $\eta>0$ and $k\geq2$
\begin{equation}\label{lim}
\frac{1}{\eta}T^{(k,n)}\xrightarrow{d}\text{Beta}(1,k-1)\text{, as }n\rightarrow \infty,
\end{equation}
since all moments converge to the moments of the Beta distribution. For the density function of $T_P^{(k,n)}$ (and hence also that of $T^{(k,n)}$)  this implies that as $n\rightarrow \infty$,
\begin{equation}\label{lim2}
 f^{(k,n)}(t)\rightarrow
 \begin{cases}
 \frac{(k-1)}{\eta}\left(1-\frac{t}{\eta}\right)^{k-2}, &0\leq t\leq\eta,\\
 0, &\text{otherwise}.
 \end{cases}
\end{equation}
Furthermore, since $k\text{Beta}(1,k)\xrightarrow{d}\text{Exp}(1)$ for $k\rightarrow \infty$,
\begin{equation}\label{lim3}
\frac{k}{\eta}T^{(k,n)}\xrightarrow{d}\text{Exp}(1)\text{, as }n\rightarrow \infty\text{ and }k\rightarrow \infty.
\end{equation}
We can relate this result to Equation (\ref{sol2}). We know that the sum of $k$ consecutive inter-transmission times converges to $\eta$ as $n\rightarrow \infty$. Equation (\ref{sol2}) then tells us that if we look at an interval of length $\eta$, $k-1$ broadcasting times  are distributed uniformly in this interval, when $n  \rightarrow \infty$. Therefore, if we scale time by a factor $k$ we will see a Poisson process with intensity 1, as $k\rightarrow \infty$, and expect to see exponential inter-broadcasting times, which is in agreement with (\ref{lim3}).\\
\newline\noindent
\textbf{Case 2: $\mathbf{\boldsymbol{\eta}=0}$.}

For the case $\eta=0$, we can write the density function $f^{(k)}(t)$ of the limiting distribution of $\sqrt{\frac{n}{2}}T^{(k,n)}$ for $k\geq 2$ after scaling Equation (\ref{dens3}) as
\begin{align}
\displaystyle
f^{(k)}(t)&=\displaystyle\frac{4}{\Gamma\left[\frac{k-1}{2}\right]}\int_{0}^\infty(t+\nu)\nu^{k-2}\exp\left[-(t+\nu)^2\right]\text{d}\nu\vspace{.2cm}\nonumber\\\displaystyle
&=\displaystyle\frac{4}{\Gamma\left[\frac{k-1}{2}\right]}t\int_{0}^\infty\nu^{k-2}\exp\left[-(t+\nu)^2\right]\text{d}\nu\vspace{.2cm}\label{denslim}\\\displaystyle&\displaystyle-\frac{4}{\Gamma\left[\frac{k-1}{2}\right]}\int_{0}^\infty\nu^{k-1}\exp\left[-(t+\nu)^2\right]\text{d}\nu.\nonumber
\end{align}
Alternatively, performing integration by parts gives
\begin{equation}\label{denslim2}
f^{(k)}(t)=\frac{2(k-1)}{\Gamma\left[\frac{k-1}{2}\right]}\int_{0}^\infty\nu^{k-3}\exp\left[-(t+\nu)^2\right]\text{d}\nu.
\end{equation}
Combining (\ref{denslim}) and (\ref{denslim2}), we find after some rewriting
\begin{equation}\label{denslim3}
f^{(k)}(t)=\frac{k-2}{k-3}f^{(k-2)}(t)-\frac{\Gamma\left[\frac{k}{2}\right]}{\Gamma\left[\frac{k-1}{2}\right]}\frac{2t}{k-3}f^{(k-1)}(t).
\end{equation}
Direct computation gives
\[f^{(2)}(t)=\frac{2}{\sqrt{\pi}}e^{-t^2},\] 
and
\[f^{(3)}(t)=\sqrt{\pi}\text{erfc}(t).\]
All other distribution functions then follow from Equation (\ref{denslim3}).
Additionally, from Equation (\ref{mom1}) we have
\begin{equation}\label{mom4}
\mathbb{E}\left[\left(\sqrt{n}T^{(k,n)}\right)^j\right]\rightarrow\frac{j!}{2^{\frac{j}{2}}}\frac{\Gamma\left[\frac{k}{2}\right]}{\Gamma\left[\frac{k+j}{2}\right]}\text{, as }n\rightarrow \infty.
\end{equation}
Using the fact that
\[\lim_{k\rightarrow\infty}\frac{\Gamma\left[\frac{k}{2}\right]\left(\frac{k}{2}\right)^{\frac{j}{2}}}{\Gamma\left[\frac{k+j}{2}\right]}=1,\]
we find
\begin{equation}\label{mom5}
\mathbb{E}\left[\left(\sqrt{nk}T^{(k,n)}\right)^j\right]\rightarrow j!\text{, as }n\rightarrow \infty\text{ and }k\rightarrow \infty.
\end{equation}
This finally implies that
\begin{equation}\label{lim4}
\sqrt{nk}T^{(k,n)}\xrightarrow{d}\text{Exp}(1)\text{, as }n\rightarrow \infty\text{ and }k\rightarrow \infty,
\end{equation}
since all moments converge to the moments of the exponential distribution.

The last result can be related to Lemma 1. If we look at the special case $k=n$, we know all broadcasting attempts will be successful. Hence, the process of nodes broadcasting will be the same as the process of nodes attempting to broadcast, from which we know it converges to a Poisson process with rate 1, as $n\rightarrow \infty$, if we scale time by a factor $n$, because of Lemma 1. Furthermore, (\ref{lim4}) also tells us that for this case, if we scale time by a factor $n$, we expect to see exponential inter-transmission times with rate 1. So these results are in good agreement with each other.

\subsection{Discussion of Results}
We briefly discuss the implications of our results. First, Equation (\ref{expec1}) gives us insight in the effects of the short-listen problem. We found that without a listen-only period the number of transmissions per time unit scales as $\mathcal{O}(\sqrt{n})$, due to the short-listen problem. However, by introducing a listen-only period, the expected  number of transmissions per interval is bounded by $k/\eta$, see (\ref{expec2}). Hence, having $\eta=\frac{1}{2}$ as suggested in \cite{trick3} reduces the number of redundant transmissions and resolves the short-listen problem.

However, one could also consider using lower values of $\eta$. For example, one could consider setting $\eta=\frac{1}{4}$. We know this roughly doubles the expected number of redundant transmissions, but it could also double the propagation speed. Indeed, whenever a node gets updated, it only has to wait for a period of at least $\tau_l/4$ before propagating the update, as opposed to a period of $\tau_l/2$. Hence, the choice of $\eta$ involves a trade-off. Lowering $\eta$ increases the propagation speed, but also the number of transmissions.  In order to get a full understanding of this trade-off, more knowledge is needed on the propagation speed of the algorithm. Combined with results from this paper, this knowledge could provide guidelines for how to optimally set the Trickle parameters and the length of the listen-only period.

Second, Equations (\ref{densk1}), (\ref{dens}), (\ref{lim2}), (\ref{lim3}), (\ref{denslim3}) and (\ref{lim4}) provide insight in the distributions of inter-transmission times. In our derivations, we have assumed that transmissions are instantaneous. However, in reality transmissions take some time, although this time is generally short compared to the interval length $\tau_h$. Using the inter-transmission time distributions, one can estimate the probability that transmissions would overlap in time and hence interfere. Potentially, one can use this knowledge to optimize the medium access protocol and the capacity of the wireless network.
\subsection{Simulations}
The model we have developed assumes that the process of nodes attempting to broadcast is Poisson with rate $n$. Lemma 1 shows that this is true as $n$ grows large. We now investigate how well this assumption holds in networks with only a few nodes. In order to do so, we compare our analytic results for the message count and the distribution of inter-transmissions times with simulations done in Mathematica. We simulate a lossless single-cell network using the Trickle algorithm, where transmissions occur instantaneously. 
\begin{figure}[h!]
\centering
\includegraphics[width=8.4cm]{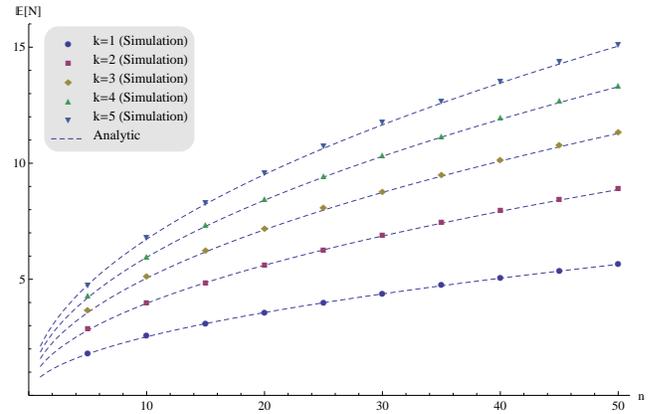}
\caption{Mean number of transmissions per interval for \boldmath$\eta=0$: analysis vs simulation.\vspace{.95cm}}\label{fig3}
\end{figure}
\begin{figure}[h!]
\centering
\includegraphics[width=8.4cm]{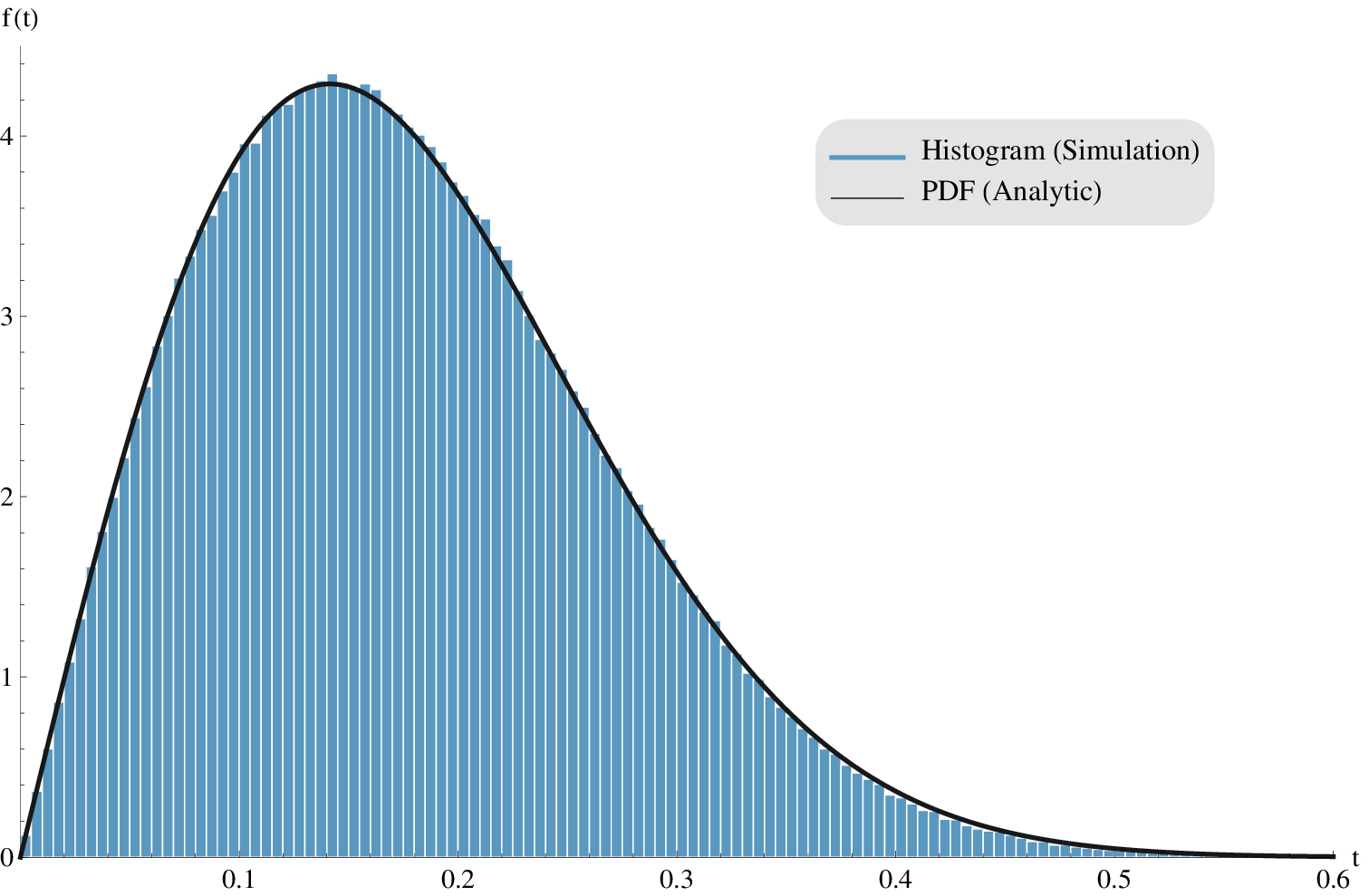}
\caption{Density plot of inter-transmission times for \boldmath$\eta=0$, \boldmath$k=1$ and \boldmath$n=50$: analysis vs simulation.\vspace{.95cm}}\label{fig4}
\end{figure}
\begin{figure}[h!]
\centering
\includegraphics[width=8.4cm]{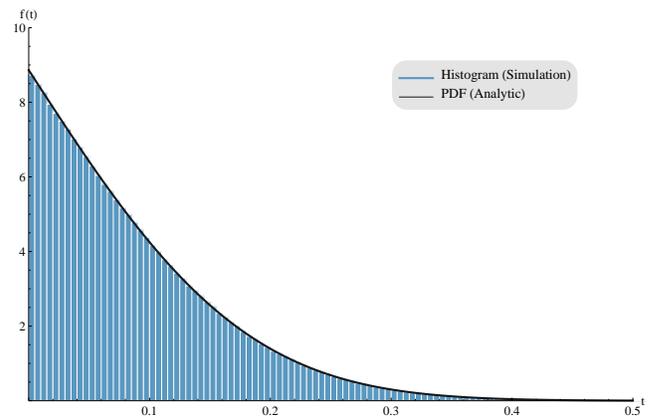}
\caption{Density plot of inter-transmission times for \boldmath$\eta=0$, \boldmath$k=3$ and \boldmath$n=50$: analysis vs simulation.}\label{fig5}
\end{figure}
In Figure \ref{fig3}, we compare simulation results  for the case $\eta=0$ with the analytic results from Equation (\ref{expec1}). For each combination of $k$ and $n$, we simulate a network consisting of $n$ nodes using the Trickle algorithm and average over $10^3$ runs of 100 virtual time units. Each run, we use a different interval skew chosen uniformly at random for each of the nodes. We see that the simulation results and the analytic results coincide very well, even for small cell-sizes, and that the analytic results provide a conservative estimate for the mean number of transmissions per interval. 

In Figure \ref{fig4}, we compare simulation results for the case $\eta=0$, $k=1$ and $n=50$ with the analytic results from Equation (\ref{densk1}). We show a histogram of the observed inter-transmission times obtained from $10^3$ runs of 100 virtual time units and the probability density function from (\ref{densk1}). We see a very good match between the analytic result and simulations, even though we consider a network consisting of only 50 nodes.

In Figure \ref{fig5}, we compare (\ref{dens}) with simulations for the case $\eta=0$, $k=3$ and $n=50$. Again, both results are in good agreement with each other, despite the relatively small size of the network. Note, also, that the asymptotically exponential behavior from (\ref{lim4}) can already be seen in the density plot.

In Figure \ref{fig6} we compare simulation results for the case $\eta=\frac{1}{2}$ with analytic results from Equation (\ref{expec2}). Here, also, we find that the analysis gives an accurate estimate for the mean number of transmissions in small networks. From our analysis we also know that $\mathbb{E}[N^{(k,n)}]$ should converge to $2k$ as $n$ grows large. From the graph we see that this convergence is quite slow.

In Figures \ref{fig7} and \ref{fig8} we compare simulation results for $k=1$ and $k=3$ respectively with the analytic results from Equations (\ref{densk1}) and (\ref{dens}), where $\eta=\frac{1}{2}$ and $n=50$. Like before, we see a very good match between the analytic results and simulations. Note, that the asymptotic behavior from (\ref{lim}) can already be recognized in the density plot of Figure \ref{fig8}. 

Additionally, we have simulated more realistic network settings in the OMNeT++ simulation tool. As mentioned in Section 3.3, in reality, transmissions do not occur instantaneously but take some time $M$, hence collisions can occur. To investigate the effect of not having instantaneous and possibly colliding transmissions, we have simulated the same scenarios as we did with Mathematica and compared the results. The simulator incorporates IEEE standard 802.15.4. 

We find that the OMNeT++ and Mathematica simulations produce nearly indistinguishable results, which is not surprising. Since $\tau_h$ tends to be very large compared to a transmission length $M$, the probability of nodes trying to broadcast simultaneously tends to be very small. Furthermore, since we are dealing with a single-cell network, the CSMA/CA protocol prevents all collisions in the OMNeT++ simulations. Therefore, both simulators provide nearly the same results. Since the Mathematica simulations are computationally less demanding and much more easily reproducible, we have omitted the OMNeT++ simulation data.
\section{Multi-Cell Network}
Suppose now that we have a network consisting of $n^2$ nodes placed on a square grid, where not all nodes are able to directly communicate with each other. Instead, each node has a fixed transmission range $R$, which means that when a node sends a message, only nodes within a distance $R$ of the broadcaster receive the message. While a full-fledged analysis of multi-cell networks is beyond the scope of the present paper, we now briefly examine how the results for the single-cell scenario obtained in Subsection 3.2 can be leveraged to derive a useful approximation. We denote by $N_{MC}^{(k,n,R)}$ the number of transmissions during an interval of length $\tau_h$ for a given threshold value $k$ in a cell consisting of $n\times n$ nodes with broadcasting range $R$. Hence, we are interested in determining the behavior of $\mathbb{E}\left[N_{MC}^{(k,n,R)}\right]$.
\begin{figure}[h!]
\centering
\includegraphics[width=8.4cm]{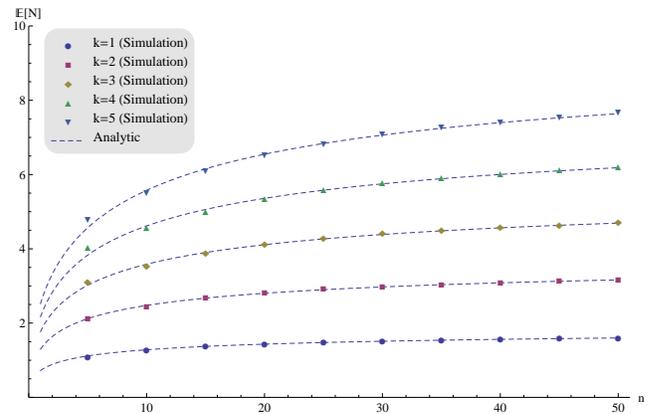}
\caption{Mean number of transmissions per interval for \boldmath$\eta=\frac{1}{2}$: analysis vs simulation.\vspace{.9cm}}\label{fig6}
\end{figure}
\begin{figure}[h!]
\centering
\includegraphics[width=8.4cm]{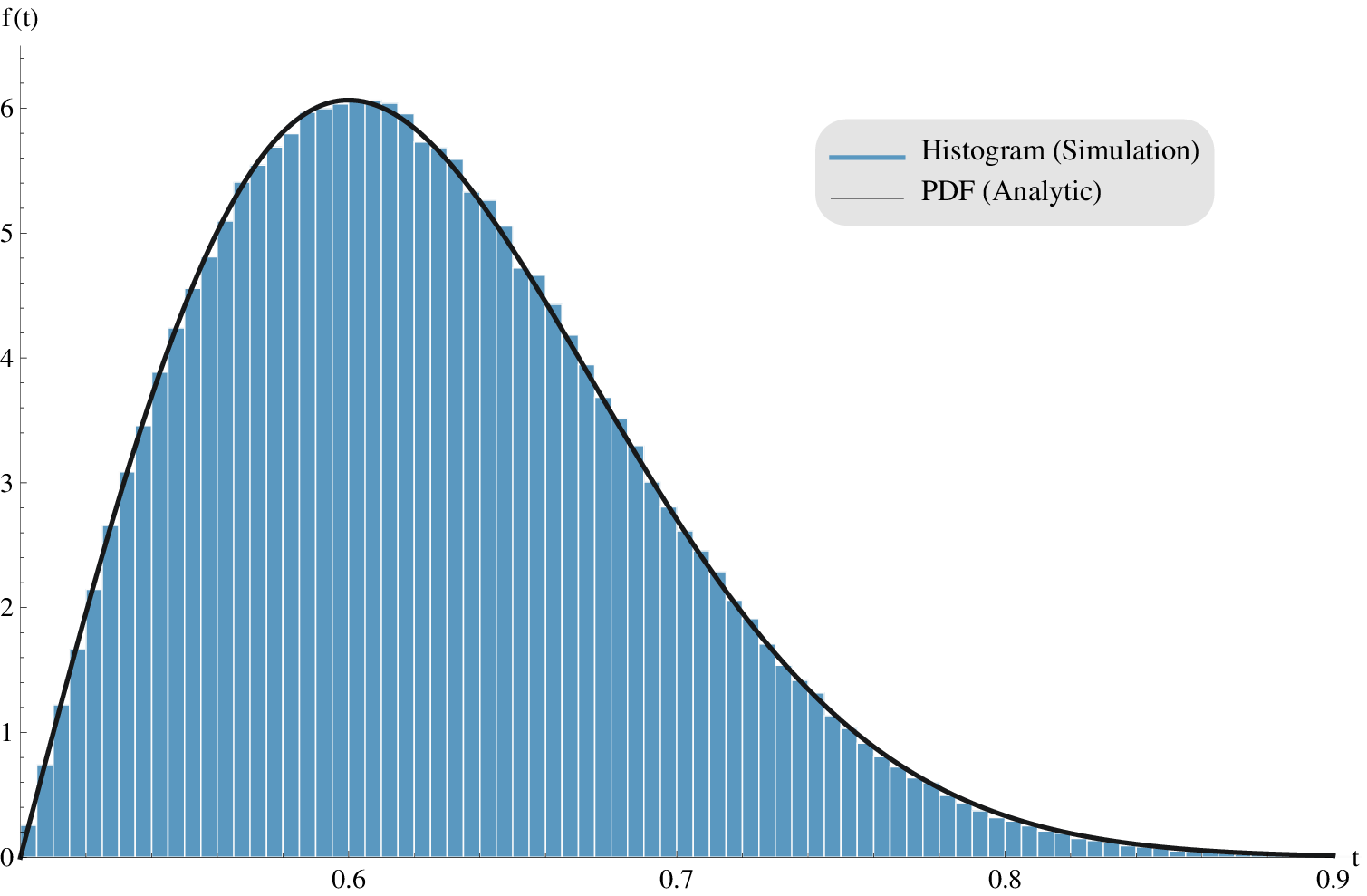}
\caption{Density plot of inter-transmission times for \boldmath$\eta=\frac{1}{2}$, \boldmath$k=1$ and \boldmath$n=50$: analysis vs simulation.\vspace{.9cm}}\label{fig7}
\end{figure}
\begin{figure}[h!]
\centering
\includegraphics[width=8.4cm]{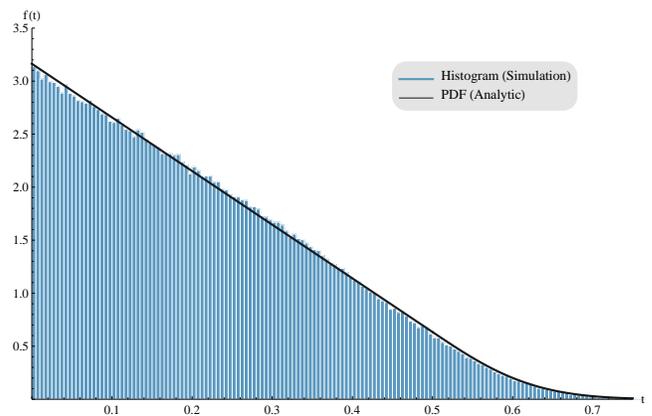}
\caption{Density plot of inter-transmission times for \boldmath$\eta=\frac{1}{2}$, \boldmath$k=3$ and \boldmath$n=50$: analysis vs simulation.}\label{fig8}
\end{figure}
\vfill\eject
\subsection{Approximation}
We will use the analytical results from Section 3 to develop an approximation for the expected number of transmissions per interval in multi-cell networks. Let us denote by  $S(R)$ the number of nodes a single broadcasting node  having a broadcasting range of $R$ can reach, i.e., the size of a single broadcasting-cell. Heuristically, we can reason as follows. We have an $n\times n$ grid consisting of approximately $n^2/S(R)$ distinct non-overlapping broadcasting-cells. Assuming each cell behaves independently of the others we can approximate the expected number of transmissions per interval in the multi-cell case as follows:
\begin{equation}\label{approx}
\resizebox{.88\linewidth}{!}{$\displaystyle
\mathbb{E}\left[N^{(k,n,R)}_{MC}\right] \approx \frac{n^2}{S(R)}\mathbb{E}\left[N^{(k,S(R))}\right]\approx \frac{n^2}{S(R)}\frac{C_{(k+1,S(R))}}{C_{(k,S(R))}}.
$}
\end{equation}

Using the fact that $S(R)\sim \pi R^2$ and Equations (\ref{expec1}) and (\ref{expec2}), we can get more insight in the behavior of our approximation. For $\eta=0$ we have that as $R$ grows large:
\[\frac{n^2}{S(R)}\mathbb{E}\left[N^{(k,S(R))}\right]\approx \sqrt{\frac{2}{\pi}}\frac{n^2}{R}\frac{\Gamma\left[\frac{k+1}{2}\right]}{\Gamma\left[\frac{k}{2}\right]}.\]
Similarly, when a listen-only period is used, i.e., $\eta>0$, we have that as $R$ grows large:
\[\frac{n^2}{S(R)}\mathbb{E}\left[N^{(k,S(R))}\right]\approx \frac{n^2}{R^2}\frac{k}{\pi \eta}.\]
Consequently, we see how the short-listen problem potentially plays a role in multi-cell networks.
\subsection{Simulations}
We compare the approximation with simulations in order to evaluate its accuracy. We do so by plotting the ratio
\begin{equation}\label{ratio}
\theta(k,n,R)=\mathbb{E}\left[N^{(k,n,R)}_{MC}\right]/\left(\frac{n^2}{S(R)}\frac{C_{(k+1,S(R))}}{C_{(k,S(R))}}\right).
\end{equation}
In Mathematica we simulate a network consisting of 50$\times$50 nodes placed on a grid for several values of $k$ and $R$. For each combination of $k$ and $R$, we simulate a lossless multi-cell network for 100 virtual time units. Each run, we use a different randomly chosen interval skew for the nodes. We use toroidal distance in order to cope with edge effects. 

In Figure \ref{fig9} we show a plot of (\ref{ratio}) for the case $\eta=0$. We see that the estimate given in Equation (\ref{approx}) tends to slightly underestimate the number of transmissions per interval. However, the approximation is accurate within a factor 1.2. Furthermore, as $k$ increases, the approximation becomes more accurate.  We also note that for fixed $k$ the accuracy remains fairly constant as $R$ grows.

Finally, let us consider the effect of a listen-only period. In Figure \ref{fig10} we show a plot of (\ref{ratio}) for the case $\eta=\frac{1}{2}$.
\begin{figure}[h!]
\centering
\includegraphics[width=8.4cm]{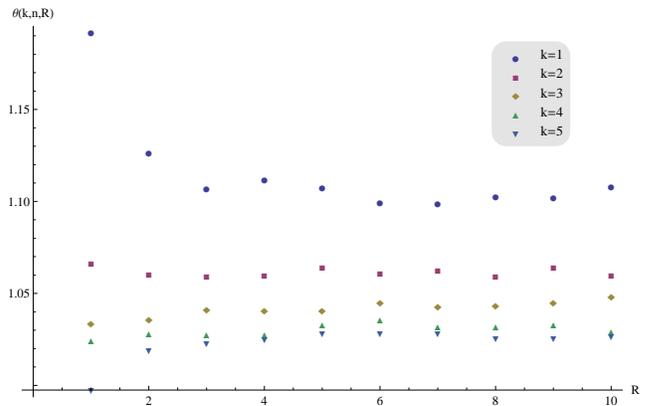}
\caption{Ratio of simulated transmission count and approximation for \boldmath$\eta=0$.}\label{fig9}
\end{figure}
\begin{figure}
\centering
\includegraphics[width=8.4cm]{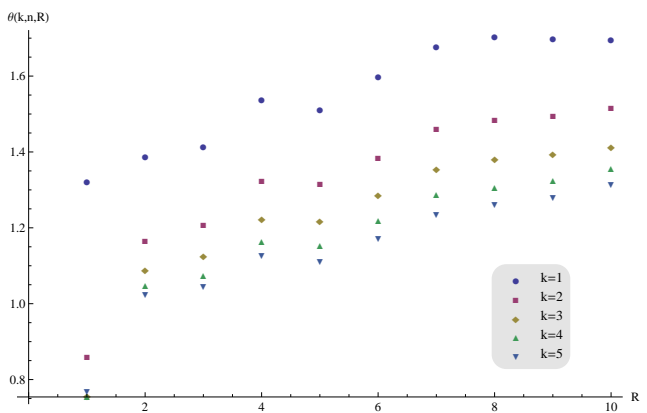}
\caption{Ratio of simulated transmission count and approximation for \boldmath$\eta=\frac{1}{2}$.}\label{fig10}
\end{figure}
We see again that the estimate given in Equation (\ref{approx}) tends to underestimate the number of transmissions per interval, more than for the case $\eta=0$. The error slowly increases with the transmission range $R$. However, as $k$ increases, the approximation becomes more accurate.

Note that in \cite{trick6} a method is presented for estimating the message count in synchronized networks. It is shown that this method also gives an accurate approximation for unsynchronized networks with $\eta=\frac{1}{2}$. However, the method is not suitable for smaller values of $\eta$ and assumes a uniformly random spatial distribution of the nodes. Hence, if one is interested in listen-only periods of different length or regular network topologies, our approximation is preferable.

\section{Conclusion}
In this paper, we presented a generalized version of the Trickle algorithm with a new parameter $\eta$, which allows us to set the length of a listen-only period. This parameter can greatly increase the speed at which the algorithm can propagate updates, while still controlling the number of transmissions. Furthermore, we have shown that this parameter influences how the transmission load is distributed among nodes in a network. We then presented a mathematical model describing how the message count and inter-transmission times of the Trickle algorithm depend on its various parameters. We were particularly interested in the effect of a listen-only period on the message count and on the inter-transmission time distributions. 

For single-cell networks we derived the distribution of inter-transmission times and the joint distribution of consecutive inter-transmission times. We showed how they depend on the redundancy constant $k$, the network-size $n$ and the length of a listen-only period $\eta$. We also investigated their asymptotic behavior as $n$ and $k$ go to infinity. These distributions give insight in the energy-efficiency of the algorithm and the probability that nodes try to broadcast simultaneously. These insights contribute to optimizing the design and usage of the Trickle algorithm in WSNs.

Specifically, we showed that in a network without a listen-only period the expected number of transmissions grows as $\mathcal{O}(\sqrt{n})$, proving a conjecture from \cite{trick3}, and we identified that the pre-factor is $\sqrt{2}\Gamma\left[\frac{k+1}{2}\right]/\Gamma\left[\frac{k}{2}\right]$. Additionally, we showed that, when a listen-only period is used, the number of transmissions per interval is bounded from above by $\frac{k}{\eta}$, proving a second conjecture from \cite{trick3}.

We have also performed a simulation study in Mathematica and the OMNeT++ simulation tool. We compared our analytic results, which hold for very large networks with instantaneous transmissions, with simulation results of small and more realistic wireless networks. We found a very good match between the analytic results and the simulations.

Additionally, we used the results from the single-cell analysis to get an approximation for the message count in multi-cell networks. These results were also compared to simulation results from Mathematica and the OMNeT++ simulation tool. The approximation proved to be fairly accurate, in particular for small values of $\eta$. A more comprehensive investigation of multi-cell networks would be an interesting topic for further research.

Finally, we note that the speed at which the Trickle algorithm can disseminate updates in WSNs is a topic of ongoing research for the authors. Combined with results from this paper, this could provide insight in how to optimally set the Trickle parameters and the length of the listen-only period. 
\section{Acknowledgments}
The authors would like to thank Guido Janssen for useful discussions regarding the integral equations presented in this paper. Additionally, they are grateful to Marc Aoun for his contributions to the performed OMNeT++ simulations. The research of the third author was done in the framework of the IAP BESTCOM program, funded by the Belgian government. 
%
\bibliographystyle{abbrv}
\bibliography{sigproc}  
%
%
\vfill\eject
\appendix
\section{Proof of Lemma 1}
In order to prove Lemma 1 we will use the following theorem (see \cite{trick8}, Proposition 11.2.VI).
\begin{theorem}
Let $M$ be a simple stationary point process on $X=\mathbb{R}$ with finite intensity $\lambda$, and let $M_n$ denote the point process obtained by
superposing $n$ independent replicates of $M$ and dilating the scale of $X$ by a factor $n$. Then as $n\rightarrow \infty$, $M_n$ converges weakly to a Poisson process with parameter measure $\lambda \mu_L(\cdot)$, where $\mu_L(\cdot)$ denotes the Lebesgue measure on $\mathbb{R}$.
\end{theorem}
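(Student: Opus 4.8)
The plan is to prove the statement by establishing convergence of Laplace functionals, which for point processes is equivalent to weak convergence. For a nonnegative continuous function $f$ with compact support, write $L_N[f]=\mathbb{E}[\exp(-\int f\,\mathrm dN)]$. The target Poisson process $\Pi$ with parameter measure $\lambda\mu_L$ satisfies $L_\Pi[f]=\exp(-\lambda\int(1-e^{-f(x)})\,\mathrm dx)$, and this functional characterizes $\Pi$ uniquely; since its intensity measure $\lambda\mu_L$ is non-atomic, $\Pi$ is automatically simple. Thus it suffices to show $L_{M_n}[f]\to L_\Pi[f]$ for every such $f$. Because $M_n$ is the superposition of $n$ independent copies of $M$ after dilating time by a factor $n$, independence and identical distribution give $L_{M_n}[f]=(L_M[f_n])^n$, where $f_n(t)=f(nt)$ is a copy of $f$ whose support $[-A/n,A/n]$ shrinks to the origin as $n\to\infty$.

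Next I would reduce everything to a single infinitesimal estimate. Writing $L_{M_n}[f]=(1-(1-L_M[f_n]))^n$ and noting that $1-L_M[f_n]\to 0$, it is enough to prove $n(1-L_M[f_n])\to\lambda\int(1-e^{-f(x)})\,\mathrm dx=:c$, since then $(1-c_n/n)^n\to e^{-c}=L_\Pi[f]$ with $c_n:=n(1-L_M[f_n])\to c$. Here $1-L_M[f_n]=\mathbb{E}[1-\exp(-\int f_n\,\mathrm dM)]$, and I would decompose this expectation according to $K_n:=M([-A/n,A/n])$, the number of points of $M$ in the shrinking window. The event $K_n=0$ contributes $0$; on $K_n=1$ the integral equals $f(nX)$ for the unique point $X$, contributing $\mathbb{E}[(1-e^{-f(nX)})\mathbf{1}_{K_n=1}]$; and the contribution of $K_n\ge 2$ is bounded by $\mathbb{P}(K_n\ge 2)$.

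The leading term would be handled by a Campbell computation. Since $M$ is stationary with finite intensity $\lambda$, its mean measure is $\lambda\,\mathrm dx$, so $\mathbb{E}[\sum_i(1-e^{-f(nt_i)})]=\lambda\int(1-e^{-f(nx)})\,\mathrm dx=\frac{\lambda}{n}\int(1-e^{-f(y)})\,\mathrm dy$ after substituting $y=nx$; multiplying by $n$ gives exactly the claimed limit $c$. It then remains to show that discarding the $K_n\ge 2$ term, and replacing the full point-sum $\sum_i$ by the single-point event $K_n=1$, each costs only $o(1/n)$, so that $n(1-L_M[f_n])$ has the same limit as $n$ times the Campbell integral.

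The hard part, and the place where simplicity of $M$ is indispensable, is precisely these negligibility estimates. I would invoke Korolyuk's theorem and the Dobrushin-type orderliness of simple stationary point processes of finite intensity, which give $\mathbb{P}(K_n\ge 1)=\lambda\lvert W_n\rvert+o(\lvert W_n\rvert)$ and $\mathbb{P}(K_n\ge 2)=o(\lvert W_n\rvert)=o(1/n)$ for the window $W_n=[-A/n,A/n]$. From these, $\mathbb{E}[K_n\mathbf{1}_{K_n\ge 2}]=\mathbb{E}[K_n]-\mathbb{P}(K_n=1)=\lambda\lvert W_n\rvert-(\lambda\lvert W_n\rvert+o(1/n))=o(1/n)$, which simultaneously controls the discarded $K_n\ge 2$ contribution and the error in passing from $\sum_i$ to the single-point event. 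Without simplicity these multiple-point events would survive in the limit and produce a compound Poisson (clustered) process rather than the simple Poisson process claimed, so establishing that they are $o(1/n)$ is the crux of the argument; the remaining steps are the routine Campbell calculation and the elementary limit $(1-c_n/n)^n\to e^{-c}$.
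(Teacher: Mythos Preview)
The paper does not prove this theorem at all: it is quoted verbatim as Proposition~11.2.VI of Daley and Vere-Jones and used as a black box in the proof of Lemma~1. So there is no ``paper's own proof'' to compare against; you have supplied a proof where the authors simply cite one.

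That said, your Laplace-functional route is a correct and standard way to establish this superposition Poisson limit. The reduction $L_{M_n}[f]=(L_M[f_n])^n$ with $f_n(t)=f(nt)$ is right, the Campbell calculation that produces $\lambda\int(1-e^{-f})\,\mathrm dx$ is exactly what one expects, and you have correctly identified the crux: the $o(1/n)$ control of $\mathbb{P}(K_n\ge 2)$ and of $\mathbb{E}[K_n\mathbf 1_{K_n\ge 2}]$, which is precisely where simplicity (via Korolyuk/Dobrushin orderliness) enters. One small point worth tightening: Korolyuk's theorem is usually stated for intervals $(0,h]$ shrinking to a point, so you should explicitly invoke stationarity to transfer the estimate to the symmetric window $[-A/n,A/n]$; this is immediate but should be said. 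With that, your sketch is a complete outline of a valid proof, essentially the one found in standard references, though Daley and Vere-Jones themselves phrase the argument through fidi distributions and avoidance probabilities rather than Laplace functionals.
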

Let $N$ be the process of an arbitrary node's broadcasting moments. Then evidently the superposition of $n$ of these processes $N_n$ gives us the process of broadcasting attempts by all the nodes in our cell. Hence if we show that $N$ is a simple stationary point process with finite intensity $\lambda=1$, we can apply Theorem 2, resulting in Lemma 1.

Let $\{B_i\}$ with $i\in \mathbb{Z}$ be the sequence of broadcasting times for an arbitrary node defining our process $N$. Then if we denote by $S$ the interval skew and assume it is uniform, i.e., $S\sim U[0,1]$, we can write
\[B_i \sim (S+i)+T_i\text{, }i\in \mathbb{Z}.\]
\vfill\eject
Here $T_i$ represents the timer of the corresponding node in the $i$th interval, thus $T_i\sim U[\eta,1]$. It is easy to see that $N$ is a simple point process with intensity $\lambda=1$.

It remains to show stationarity. Let $N^s$ be the point process resulting from applying a shift operator to the process $N$. That is, if  $\{B^s_i\}$ with $i\in \mathbb{Z}$ is the sequence of events for the shifted process $N^s$ we have
\[B^s_i \sim (S+i)+T_i+s\text{, }i\in \mathbb{Z}.\]
We show that $\{B_i\}$ and $\{B_i^s\}$ with $i\in \mathbb{Z}$ have the same distribution, which means that the processes $N$ and $N^s$ coincide in distribution, implying stationarity. We first write\vspace{.1cm}
\[ \{B_i^s\}_{i\in \mathbb{Z}}=\{(S+i)+T_i+s\}_{i\in \mathbb{Z}}\]\[=\{(S+s-\lfloor S+s\rfloor)+(i+\lfloor S+s\rfloor+T_i)\}_{i\in \mathbb{Z}}\]
Now note that $\left(S+s-\lfloor S+s\rfloor\right)\sim U[0,1] \sim S$. Furthermore, $(i+\lfloor S+s\rfloor+T_i) \sim \left(i+\lfloor S+s\rfloor +T_{i+\lfloor S+s\rfloor}\right)$, since the $T_i$ are independent and identically distributed. Moreover, $\left\{i+\lfloor S+s\rfloor +T_{i+\lfloor S+s\rfloor}\right\}_{i\in \mathbb{Z}}\sim \{i+T_{i}\}_{i\in \mathbb{Z}}$. This allows us to conclude
\[\left\{B_i^s\right\}_{i\in \mathbb{Z}}\sim\{B_i\}_{i\in \mathbb{Z}}.\]
Consequently Theorem 2 applies and Lemma 1 follows. 
\end{document}